\newcommand{\rect}[5]{    \draw[thick] ($ (#1.north west)+(#5pt,-#5pt) $) -- ($ (#5pt,#5pt)+(#2.south west) $);
    \draw[thick] ($ (#2.south west)+(#5pt,#5pt) $) -- ($ (#3.south east)+(-#5pt,#5pt) $);
    \draw[thick] ($ (#3.south east)+(-#5pt,#5pt) $) -- ($ (#4.north east)+(-#5pt,-#5pt) $);
    \draw[thick] ($ (#4.north east)+(-#5pt,-#5pt) $) -- ($ (#1.north west)+(#5pt,-#5pt) $);}
\newcommand{\rectd}[5]{    \draw[thick,dashed] ($ (#1.north west)+(#5pt,-#5pt) $) -- ($ (#5pt,#5pt)+(#2.south west) $);
    \draw[thick,dashed] ($ (#2.south west)+(#5pt,#5pt) $) -- ($ (#3.south east)+(-#5pt,#5pt) $);
    \draw[thick,dashed] ($ (#3.south east)+(-#5pt,#5pt) $) -- ($ (#4.north east)+(-#5pt,-#5pt) $);
    \draw[thick,dashed] ($ (#4.north east)+(-#5pt,-#5pt) $) -- ($ (#1.north west)+(#5pt,-#5pt) $);}
\newtheoremstyle{aufschrieb}   
  {0.5cm}                 
  {0.5cm}                 
  {\itshape}                         
  {}                         
  {\normalfont\bfseries}  
  {\normalfont :} 
  {\newline}
  {}
\theoremstyle{aufschrieb}
\definecolor{commentgreen}{rgb}{0.1,0.1,0.6}
\newtheorem{algo}{Algorithm}[section]
\newtheorem{theorem}[algo]{Theorem}
\newtheorem{definition}[algo]{Definition}
\newtheorem{observation}[algo]{Observation}
\newcommand{\bset}{\lbrace 0,1 \rbrace}
\DeclareMathOperator{\mod2}{mod}
\newcommand{\eg}{e.\,g., }
\newcommand{\Eg}{E.\,g., }
\newcommand{\ie}{i.\,e., }
\newcommand{\Ie}{I.\,e., }
\begin{document}

\title{OBDD-Based Representation of Interval Graphs}

\author{Marc Gill\'{e}\thanks{Supported by Deutsche Forschungsgemeinschaft, grant BO 2755/1-1.}\\{\normalsize TU Dortmund, LS2 Informatik, Germany}}

\maketitle

\begin{abstract}
\noindent A graph $G = (V,E)$ can be described by the characteristic function of the edge set $\chi_E$ which maps a pair of binary encoded nodes to $1$ iff the nodes are adjacent. Using \emph{Ordered Binary Decision Diagrams} (OBDDs) to store $\chi_E$ can lead to a (hopefully) compact representation. Given the OBDD as an input, symbolic/implicit OBDD-based graph algorithms can solve optimization problems by mainly using functional operations, \eg quantification or binary synthesis. While the OBDD representation size can not be small in general, it can be provable small for special graph classes and then also lead to fast algorithms. In this paper, we show that the OBDD size of unit interval graphs is $O(\vert V \vert /\log \vert V \vert)$ and the OBDD size of interval graphs is $O(\vert V \vert \log \vert V \vert)$ which both improve a known result from Nunkesser and Woelfel (2009). Furthermore, we can show that using our variable order and node labeling for interval graphs the worst-case OBDD size is $\Omega(\vert V \vert \log \vert V \vert)$. We use the structure of the adjacency matrices to prove these bounds. This method may be of independent interest and can be applied to other graph classes. We also develop a maximum matching algorithm on unit interval graphs using $O(\log \vert V \vert)$ operations and a coloring algorithm for unit and general intervals graphs using $O(\log^2 \vert V \vert)$ operations and evaluate the algorithms empirically.
\end{abstract}

\section{Introduction}
\label{sec:intro}
The development of graph algorithms is a classic and intensively studied area of computer science. But the requirements on graph algorithms have changed by the emergence of massive graphs, \eg the internet graph or social networks. The representation size of a graph with $N$ nodes and $M$ edges given as adjacency matrix or lists is $\Theta(N^2)$ or $\Theta(N + M)$. There are applications, \eg dealing with a state transition graphs in circuit verification, where even polynomial running time may not be feasible or the input does not fit into the main memory. In order to deal with such massive graphs, \emph{symbolic} or \emph{implicit} graph algorithms have been investigated, where the input is represented by the characteristic function of the edge set. Identifying the nodes by binary numbers, the characteristic function becomes a Boolean function, which can be represented by \emph{Ordered Binary Decision Diagrams} (OBDDs), which are a well known and commonly used data structure for Boolean functions. OBDDs were introduced by Bryant \cite{Bryant86} and support many important functional operations on Boolean functions efficiently. Therefore, a research area came up concerning the design and analysis of implicit/symbolic (graph) algorithms on OBDD represented inputs (\cite{CLMD94, HachtelS97, GentiliniPP03, Sawitzki04, SawitzkiSOFSEM06, SawitzkiLATIN06, Woe2006, GentiliniPP08}). A motivation of this line of research is that implicit representations can be significantly smaller than explicit representations on structured graphs, thus enabling the algorithms to process larger amounts of data. In particular, processing data in a more compact form might speed up the time needed for the algorithms.

In theory, problems on implicitly represented inputs become harder than their explicit equivalent. Even the $s$-$t$-connectivity problem, \ie the decision whether two nodes $s$ and $t$ of an undirected graph are connected, is PSPACE-complete on OBDD-represented graphs \cite{FeigenbaumKVV99} while the explicit variant is in $L$, the complexity class consisting of all problems decidable by a logspace Turing machine. Nevertheless, implicit algorithms are successful in many practical applications, \eg model checking \cite{BurchCMDH92}, integer linear programming \cite{LaiPV94} and logic minimization \cite{Coudert95}, and can be seen as a kind of heuristic (regarding time and space) to compute an optimal solution for problems on very large instances. One of the first implicit algorithms for a classical graph problem was the maximum flow algorithm on $0$-$1$-networks presented in \cite{HachtelS97}. There, Hachtel and Somenzi were able to solve instances up to $10^{36}$ edges and $10^{27}$ nodes in reasonable time. Sawitzki \cite{Sawitzki04} described another implicit algorithm for the same problem, which uses $O(N \log^2 N)$ functional operations. 

The number of operations used in an implicit algorithm is an important measure of difficulty \cite{BloemGS06} but it is also known \cite{BloemGS06,HojatiTKB93} that an implicit algorithm computing the transitive closure, that uses an iterative squaring approach and a polylogarithmic number of operations, is often inferior to an implicit sequential algorithm, which needs in worst case a linear number of operations. In this case the advantage of the small number of functional operations is canceled out by the probably large sizes of the intermediate OBDDs. In order to analyze the actual running time of an OBDD-based graph algorithm, it is crucial to determine both the number of functional operations and the sizes of the OBDDs which are generated during the computation.

Sawitzki \cite{SawitzkiSOFSEM06, Sawitzki07} showed that all problems which are decidable by polynomially many processors using polylogarithmic time (\ie which are in $NC$) are computable by an implicit algorithm using a polylogarithmic number of functional operations on a logarithmic number of Boolean variables. This is a quite structural result and does not lead to either an efficient transformation of parallel algorithms into implicit algorithms or a guarantee that implicit algorithms using a polylogarithmic number of functional operations perform well in practice. 

Implicit algorithms using a polylogarithmic number of operations were designed for instance for topological sorting \cite{Woe2006}, maximal matching \cite{BolligP12} and minimum spanning tree \cite{Bollig12} where a matching $M$, \ie a set of edges without a common vertex, is called maximal if $M$ is no proper subset of another matching.
However, non trivial bounds on the sizes of the OBDDs are hard to determine and, with it, the actual running time of an implicit algorithm. Only on very structured graphs like grid graphs good analysis of the running time are known, \eg for the maximum flow algorithm \cite{Sawitzki04}, topological sorting \cite{Woe2006} and maximal and maximum matching \cite{BolligP12, BolligGP12} (a matching is called maximum matching if there is no matching consisting of a larger number of edges). As a consequence, the practical performance of implicit algorithms is often evaluated experimentally, \eg for the maximum matching problem in bipartite graphs \cite{BolligGP12} or for the maximum flow problem \cite{HachtelS97, Sawitzki04}.

For a good running time the input size of an implicit algorithm, \ie the size of the OBDD representing the input graph, should be small. Nunkesser and Woelfel \cite{NuWo09} showed that the size of an OBDD representing an arbitrary graph is $O(N^2 / \log N)$, which is similar to the space needed for a representation by adjacency matrix. For bipartite graphs they were able to show a lower bound of $\Omega(N^2 / \log N)$, which means that there is a bipartite graph which OBDD size is bounded below by $\Omega(N^2 / \log N)$. Notice that a lower bound of the OBDD size of a graph class does not mean that the OBDD size of every graph from this class is bounded below by this value.

Nunkesser and Woelfel \cite{NuWo09} also investigated other restricted graph classes such as interval graphs. An interval graph is an intersection graph of intervals on the real line, \ie two intervals (nodes) are adjacent iff they have a nonempty intersection. If these intervals have a length of $1$, then the graph is called unit interval graph. (Unit) Interval graphs were extensively studied and have many applications, \eg in genetics, archaeology, scheduling, and much more \cite{Golumbic04}. Nunkesser and Woelfel \cite{NuWo09} proved that general interval graphs can be represented by OBDDs of size $O(N^{3/2} \log^{3/4} N)$ and $O(N / \sqrt{\log N})$ for unit interval graphs. Due to counting arguments, they proved a lower bound of $\Omega(N)$ for general interval graphs and $\Omega(N/\log N)$ for unit interval graphs.

As in \cite{NuWo09}, we use $n = \lceil \log N \rceil$ bits, \ie the minimal number of bits, to encode the nodes of a graph. Since the worst-case OBDD size is exponentially large in the number of input bits, using $\chi_E$ in an implicit algorithm motivates to use a minimal amount of input bits to avoid a large worst-case OBDD size.
Using a larger domain for the labels also possibly increases the size of the data structure storing the valid labels which is often needed in implicit algorithms.
Aiming for a good compression of $\chi_E$, Meer and Rautenbach \cite{MeerR09} investigated graphs with bounded clique-width or tree-width and increases the number of bits used for the node labeling to $c \cdot \log N$ with constant $c$ and were able to improve for instance the OBDD size of cographs from $O(N \log N)$ \cite{NuWo09} to $O(N)$.

We investigate implicit algorithms for coloring interval graphs, \ie coloring the nodes of an interval graph such that all adjacent nodes have different colors and the number of used colors is minimal, and for maximum matching on unit interval graphs. Coloring of interval graphs has applications in VLSI design and scheduling \cite{GuptaLL79} and there is an optimal greedy coloring algorithm \cite{Olariu91} which can be implemented in linear time. The first parallel matching algorithm for (general) interval graphs was given by a parallel algorithm for two processor scheduling \cite{HelmboldM87} using $N^{10}$ processors and $O(\log^2 N)$ time. It was improved step-by-step until in \cite{ChungPC97} the current best parallel algorithm was presented using $N^3/\log^4 N$ and $O(\log^2 N)$ time. Furthermore, they showed that it is possible to compute a maximum matching on unit interval graphs in parallel by $O(N / \log N)$ processors using $O(\log N)$ time.

\subsection{Our Contribution}
In Section 3 we present a new method to show upper and lower bounds of the size of an OBDD representing a graph. We sort the rows and columns of the adjacency matrix of a graph in such a way that each node of the OBDD (labeled with the same input variable) corresponds to a distinct block of this adjacency matrix. Thus, by counting the number of different blocks, which is probably easier than counting different subfunctions, we can bound the number of nodes of the OBDD. 

Using this method and some known structure of the adjacency matrix of interval graphs \cite{Mertzios08}, we improve the upper bound on general interval graphs to \linebreak $O(N \log N)$ while using a more convenient way to label the nodes than in \cite{NuWo09}. Using a probabilistic argument, we prove that the worst-case OBDD size is $\Omega(N \log N)$ if we use the same labeling and variable order as in our upper bound. We can close the gap of the upper bound and the lower bound in the case of unit interval graphs and show that the OBDD size is $O(N /\log N)$.

In Section 4 we present two implicit algorithms for (unit) interval graphs: A maximum matching algorithm for unit interval graphs using only $O(\log N)$ functional operations and a coloring algorithm for interval graphs using $O(\log^2 N)$ functional operations.
The matching algorithm takes advantage of the information given by the labels of the nodes. Furthermore, we were able to compute the transitive closure of a unit interval graph using only $O(\log N)$ operations instead $O(\log^2 N)$ operations, which are needed in general. In order to implement this algorithm efficiently, we have to extend a known result due to Woelfel \cite{Woe2006} to a different variable order for constructing OBDDs representing multivariate threshold functions.
For the coloring algorithm we show how to get a total order on the right endpoints (given that the labels of the nodes respect the order of the left endpoints) and how to compute a minimal coloring of the nodes by using these orders based on a optimal greedy algorithm \cite{Olariu91}.
To the best of the author's knowledge, this is the first time that the labeling of nodes is used to speed up an implicit algorithm for a large graph as interval graphs and to improve the number of functional operations. In Section 5 we evaluate the implicit algorithms experimentally and see that the matching algorithm is both very fast and space efficient while, unfortunately, the coloring algorithm does not perform well on both unit and general interval graphs. The poor performance of the coloring algorithm is very likely due to the fact that the implicit algorithm is simulating the sequential coloring algorithm. Nevertheless, it uses some nice ideas to accomplish this which differ from the parallel implementation ideas.

A simple implicit representation of an interval graph with $N$ nodes is a list of $N$ intervals and needs $\Theta(N)$ space. Our result on the OBDD size of interval graphs shows that in the worst case the OBDD representation is almost as good as the interval representation with the advantage that it is possible to use $o(N)$ space for some instances. Together with the experiments, this shows that the representation of at least unit interval graphs with OBDDs enables a good compression without loosing the usability in algorithms.

\section{Preliminaries}
\label{sec:pre}
\subsection{OBDDs}
\label{sec:pre_obdd}
We denote the set of Boolean functions $f: \bset^n \rightarrow \bset$ by $B_n$. Let $(x_0,\ldots,x_{n-1}) = x \in \bset^n$ be a binary number of length $n$ and $\vert x \vert := \sum_{i=0}^{n-1} x_i \cdot 2^i$ the value of $x$. Further, let $l \in \mathbb{N}$ be a natural number then we denote by $[l]_2$ the corresponding binary number of $l$, \ie $\vert [l]_2 \vert = l$.

Let $G = (V,E)$ be a directed graph with node set $V = \lbrace v_0, \ldots, v_{N-1} \rbrace$ and edge set $E \subseteq V \times V$. Here, an undirected graph is interpreted as a directed symmetric graph. Implicit algorithms are working on the characteristic function $\chi_E \in B_{2n}$ of $E$ where $n = \lceil \log N \rceil$ is the number of bits needed to encode a node of $V$ and $\chi_E(x,y) = 1$ if and only if $(v_{\vert x \vert}, v_{\vert y \vert}) \in E$. In order to deal with Boolean functions, OBDDs were introduced by Bryant \cite{Bryant86} to get a compact representation (see Fig.\,\ref{fig:obdd}), which supports a bunch of functional operations efficiently.
\begin{definition}[Ordered Binary Decision Diagram (OBDD)]

\textbf{Order.} A \emph{variable order} $\pi$ on the input variables $X = \lbrace x_0, \ldots, x_{n-1} \rbrace$ of a Boolean function $f \in B_n$ is a permutation of the index set $I = \lbrace 0, \ldots, n-1 \rbrace$.

\textbf{Representation.} A \emph{$\pi$-OBDD} is a directed, acyclic and rooted graph $G$ with two sinks labeled by the constants $0$ and $1$. Each inner node is labeled by an input variable from $X$ and has exactly two outgoing edges labeled by $0$ and $1$. Each edge $(x_i,x_j)$ has to respect the variable order $\pi$, \ie $\pi(i) < \pi(j)$.

\textbf{Evaluation.} An assignment $a \in \bset^n$ of the variables defines a path from the root to a sink by leaving each $x_i$-node via the $a_i$-edge. A $\pi$-OBDD $G_f$ represents $f$ iff for every $a \in \bset^n$ the defined path ends in a sink with label \nolinebreak $f(a)$.

\textbf{Complexity.} The \emph{size} of a $\pi$-OBDD $G$, denoted by $\vert G \vert$, is the number of nodes in $G$. The $\pi$-OBDD size of a function $f$ is the minimum size of a $\pi$-OBDD representing $f$. The OBDD size of $f$ is the minimum $\pi$-OBDD size over all variable orders $\pi$. The \emph{width} of $G$ is the maximum number of nodes labeled by the same input variable.
\end{definition}
\begin{figure}
\begin{center}
\begin{tikzpicture}
	\node (x1)   at (2,4)   [inner] {$x_1$};
	\node (y1-1) at (1,3)   [inner] {$y_1$}; 
	\node (y1-2) at (3,3)   [inner] {$y_1$}; 
	\node (x0)   at (1.7,2)   [inner] {$x_0$}; 

    \node (y0-2) at (2.5,1)   [inner] {$y_0$}; 
	\node (0)    at (1,0)   [sink]  {$0$};
	\node (1)    at (3,0)   [sink]  {$1$};

	\draw [zero] (x1) to (y1-1);
	\draw [one]  (x1) to (y1-2);
	\draw [zero] (y1-1) to (x0);
	\draw [one]  (y1-1) to (0);
	\draw [zero] (y1-2) to (1);
	\draw [one]  (y1-2) to (x0);
	
	\draw [zero] (x0) to (0);
	\draw [one]  (x0) to (y0-2);
	\draw [zero] (y0-2) to (1);
	\draw [one]  (y0-2) to (0);	
	
\end{tikzpicture}
\end{center}
\caption{A minimal $\pi$-OBDD representing the function $GT(x,y)$ with $GT(x,y) = 1$ iff $\vert x \vert > \vert y \vert$ and $\pi = (x_1,y_1,x_0,y_0)$}
\label{fig:obdd}
\end{figure}
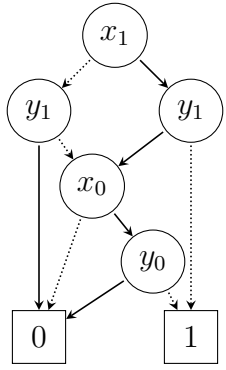
\noindent
In the following we describe a list of important operations on Boolean functions which we will use in this paper and give the time requirements in the case of OBDDs (see, \eg Section 3.3 in \cite{Wegener00} for a detailed list). 
Let $f$ and $g$ be Boolean functions in $B_n$ on the variable set 
$X=\{x_0, \ldots, x_{n-1}\}$ and let $G_f$ and $G_g$ be OBDDs representing $f$ and $g$, respectively.

\begin{enumerate}
\item \textbf{Negation:} Given $G_f$, compute an OBDD representing the function $\overline{f} \in B_n$. Time: $O(1)$
\item \textbf{Replacement by constant:} Given $G_f$, an index $i\in \{0, \ldots, n-1\}$, and a Boolean constant $c_i\in \{0,1\}$, compute an OBDD representing the subfunction $f_{|x_i=c_i}$. Time: $O(\vert G_f \vert)$
\item \textbf{Synthesis:} Given $G_f$ and $G_g$ and a binary Boolean operation $\otimes \in B_2$, compute an OBDD representing the function $h\in B_n$ defined as $h:=f\otimes g$. Time: $O(\vert G_f \vert \cdot \vert G_g \vert)$
\item \textbf{Quantification:} Given $G_f$, an index $i\in \{1, \ldots, n\}$ and a quantifier $Q\in\{\exists, \forall\}$, compute an OBDD representing the function $h\in B_n$ defined as $h:=Q x_i: f$ where  $\exists x_i: f := f_{|x_i=0}\vee f_{|x_i=1}$ and $\forall x_i: f := f_{|x_i=0}\wedge f_{|x_i=1}$. Time: see replacement by constant and synthesis
\end{enumerate}
\noindent
In the rest of the paper quantifications over $k$ Boolean variables 
$Q x_1, \ldots, x_{k}: f$ are denoted by $Q x: f$, where $x= (x_1, \ldots, x_k)$. The following operation (see, \eg \cite{SawitzkiLATIN06}) is 
useful to reverse the edges of a given graph, \ie let $\chi_E(x,y)$ be a directed graph and we want to compute $\chi_E(y,x)$ which represents the edge set $\lbrace (v_{\vert y \vert}, v_{\vert x \vert}) \mid (v_{\vert x \vert},v_{\vert y \vert}) \in E \rbrace$ consisting of the reverse edges of $E$.
\begin{definition}
Let $k \in \mathbb{N}$, $\rho$ be a permutation of $\{1,\ldots, k\}$ and
$f\in B_{kn}$ be defined on Boolean variable vectors $x^{(1)}, \ldots, x^{(k)}$ of length $n$. The argument reordering $\mathcal{R}_\rho(f)\in B_{kn}$ with respect to $\rho$ is defined by $\mathcal{R}_\rho(f)(x^{(1)}, \ldots, x^{(k)}) := f(x^{(\rho (1))}, \ldots, x^{(\rho (k))})$.
\end{definition}
\noindent
This operation can be computed by just renaming the variables and repairing the variable order using $3(k-1)n$ functional operations (see \cite{BolligLW96}). 

An important variable order is the interleaved variable order which is defined on vectors of length $n$ where the variables with the same significance are tested one after another.

\begin{definition}
Let $x^{(1)},\ldots,x^{(k)} \in \bset^n$ be $k$ input variable vectors of length $n$. Let $\pi$ be a permutation of $\lbrace 0,\ldots,n-1 \rbrace$. Then 
$$ \pi_{k,n} = (x_{\pi(0)}^{(1)},x_{\pi(0)}^{(2)},\ldots,x_{\pi(0)}^{(k)}, \ldots, x_{\pi(n-1)}^{(1)},\ldots,x_{\pi(n-1)}^{(k)}) $$
is called \emph{$k$-interleaved} variable order for $x^{(1)},\ldots,x^{(k)}$. If $\pi = (n-1,\ldots,0)$ then we say that the variables are tested with decreasing significance.
\end{definition}
\noindent
Choosing an interleaved variable order in OBDD-based algorithms is common practice, since auxiliary functions, \eg the equality or the greater than function, and multivariate threshold functions, which we will define in section \ref{sec:algig_thresh}, have to use an interleaved variable order for a compact OBDD representation. 

The input of an OBDD-based graph algorithm consists of the characteristic function $\chi_E$ represented by an OBDD and the output is a characteristic function $\chi_O$ of a set $O$ which is computed by mainly using functional operations. As we can see in the above listing of the operations, the running time of such algorithms depends on the actual size of the OBDDs which are used for an functional operation during the computation. In general, it is difficult to prove a good upper bound on the running time because we have to know a good upper bound on the size of every OBDD used as an input for an operation.

However, if the size of the OBDD representing the input graph is large, any implicit algorithm using this OBDD is likely to have an inadequate running time. Beside the variable order, the labeling of the nodes is another optimization parameter with huge influence on the input size. For OBDDs representing state transitions in finite state machines, Meinel and Theobald \cite{MeinelT99} showed that there can be an exponential blowup of the OBDD size from a good labeling to a worst-case labeling. Nevertheless, a small input OBDD size, \ie a good labeling of the nodes for some variable order, does not guarantee a good performance of the implicit algorithm since the sizes of the intermediate OBDDs do not have to be small. Indeed, an exponential blowup from input to output size is possible \cite{SawitzkiLATIN06,Bollig12}.

We denote by $f_{\mid x_{\pi(0)}=a_{\pi(0)},\ldots,x_{\pi(i-1)}=a_{\pi(i-1)}}$ the subfunction where $x_{{\pi(j)}}$ is replaced by the constant $a_{\pi(j)}$ for $0 \leq j \leq i-1$. The function $f$ \emph{depends essentially} on a variable $x_i$ iff $f_{\mid x_i = 0} \neq f_{\mid x_i = 1}$. A characterization of minimal $\pi$-OBDDs due to Sieling and Wegener \cite{SielingW93} can be often used to bound the OBDD size. 
\begin{theorem}[\cite{SielingW93}]
\label{thm:minimal_obdd}
Let $f \in B_n$ and for all $i=0,\ldots,n-1$ let $s_i$ be the number of different subfunctions which result from replacing all variables $x_{{\pi(j)}}$ with $0 \leq j \leq i-1$ by constants and which essentially depend on $x_{\pi(i)}$. Then the minimal $\pi$-OBDD representing $f$ has $s_i$ nodes labeled by $x_{\pi(i)}$, \ie the minimal $\pi$-OBDD has size $\sum_{i=0}^{n-1} s_i$.
\end{theorem}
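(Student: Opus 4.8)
The plan is to prove the two inequalities separately: a lower bound showing that \emph{every} $\pi$-OBDD for $f$ contains at least $s_i$ nodes labeled $x_{\pi(i)}$, and a matching construction exhibiting a $\pi$-OBDD with exactly $s_i$ such nodes for every $i$. Throughout, for a node $v$ of a $\pi$-OBDD $G$ I write $f_v$ for the function computed by the sub-OBDD rooted at $v$; since edges respect $\pi$, if $v$ is labeled $x_{\pi(k)}$ then $f_v$ depends only on $x_{\pi(k)},\ldots,x_{\pi(n-1)}$, and $f_{\mathrm{root}} = f$.

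For the lower bound, fix a $\pi$-OBDD $G$ representing $f$ and a partial assignment $a$ to the prefix $x_{\pi(0)},\ldots,x_{\pi(i-1)}$. Define a partial evaluation path from the root that leaves each encountered node labeled by some $x_{\pi(k)}$ with $k<i$ via its $a_{\pi(k)}$-edge and stops as soon as it reaches a sink or a node labeled by some $x_{\pi(k)}$ with $k\ge i$; call the stopping node $v_a$. The first key step is to check, using that $G$ represents $f$ and that labels strictly increase in $\pi$-order along every path, that $f_{v_a} = f_{\mid x_{\pi(0)}=a_{\pi(0)},\ldots,x_{\pi(i-1)}=a_{\pi(i-1)}}$ as a function of $x_{\pi(i)},\ldots,x_{\pi(n-1)}$ (extend $a$ to a full assignment and compare the evaluation path of $G$ with the one starting at $v_a$). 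If this subfunction essentially depends on $x_{\pi(i)}$, then $v_a$ cannot be a sink and cannot be labeled by $x_{\pi(k)}$ with $k>i$ (such a sub-OBDD computes a function independent of $x_{\pi(i)}$), while by construction it is not labeled by any $x_{\pi(k)}$ with $k<i$; hence $v_a$ is labeled $x_{\pi(i)}$. Now pick, for each of the $s_i$ subfunctions $g$ counted by $s_i$, one prefix assignment $a$ with $f_{\mid a}=g$; the nodes $v_a$ so obtained are labeled $x_{\pi(i)}$ and are pairwise distinct because $f_{v_a}=g$ and the $g$'s are pairwise distinct. So $G$ has at least $s_i$ nodes labeled $x_{\pi(i)}$.

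For the upper bound I would construct the canonical reduced OBDD directly. Let $S_i$ be the set of distinct subfunctions obtained from $f$ by replacing $x_{\pi(0)},\ldots,x_{\pi(i-1)}$ by constants (so $S_0=\{f\}$ and each $g\in S_i$ depends only on $x_{\pi(i)},\ldots,x_{\pi(n-1)}$), and for a non-constant $g\in\bigcup_j S_j$ let $\mathrm{ind}(g)$ be the least index with $g$ essentially depending on $x_{\pi(\mathrm{ind}(g))}$. Create one node $v_g$ for each distinct non-constant $g$, label it $x_{\pi(\mathrm{ind}(g))}$, and direct its $c$-edge to $v_{g_{\mid x_{\pi(\mathrm{ind}(g))}=c}}$, or to the sink $c'$ if that subfunction is the constant $c'$; take $v_f$ as the root (or a sink if $f$ is constant). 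A routine induction on $\mathrm{ind}$ shows that $v_g$ computes $g$, so this is a $\pi$-OBDD for $f$, and it respects $\pi$ because $\mathrm{ind}(g_{\mid x_{\pi(\mathrm{ind}(g))}=c}) > \mathrm{ind}(g)$. The bookkeeping step is the identity that, for each $i$, $g\mapsto v_g$ is a bijection between $\{g\in S_i : g \text{ essentially depends on } x_{\pi(i)}\}$ (the set counted by $s_i$) and the nodes of the construction labeled $x_{\pi(i)}$: one verifies that for $g\in\bigcup_j S_j$, having $\mathrm{ind}(g)=i$ is equivalent to $g\in S_i$ together with $g$ essentially depending on $x_{\pi(i)}$ (if $g\in S_j$ with $j\le i$, then since $g$ does not depend on $x_{\pi(j)},\ldots,x_{\pi(i-1)}$ we also have $g\in S_i$; if $j>i$ then $g$ cannot depend on $x_{\pi(i)}$). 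Hence the construction has exactly $s_i$ nodes labeled $x_{\pi(i)}$, and combined with the lower bound the minimal $\pi$-OBDD has exactly $s_i$ such nodes, so size $\sum_{i=0}^{n-1} s_i$ (the at most two sinks being absorbed into the convention used for $\vert G\vert$ here).

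I expect the main obstacle to be the lower-bound argument, precisely because a $\pi$-OBDD need not query every variable: one must argue carefully that the partial evaluation path associated with a prefix assignment stops at a node labeled by \emph{exactly} $x_{\pi(i)}$ whose sub-OBDD computes the intended subfunction, rather than sliding past $x_{\pi(i)}$ to a later variable or to a sink. Once the notion of "subfunction realised at a node" is pinned down this way, the correctness of the canonical construction and the counting identity are straightforward bookkeeping.
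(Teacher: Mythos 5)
Your proof is correct: both the lower bound (each of the $s_i$ distinct subfunctions essentially depending on $x_{\pi(i)}$ is realised at a distinct $x_{\pi(i)}$-labeled node reached by the partial evaluation path of a prefix assignment) and the matching canonical construction with one node per distinct non-constant subfunction are sound, including the careful point that the stopping node cannot slide past $x_{\pi(i)}$. Note that the paper itself gives no proof of this statement -- it is quoted from Sieling and Wegener \cite{SielingW93} -- and your argument is essentially the standard proof from that reference, so there is nothing to reconcile beyond the harmless convention about whether the two sinks are counted in $\vert G \vert$, which you already address.
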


\subsection{Basic Functions and Implicit Algorithms}
\label{sec:pre_basicalg}
It is well known that the OBDD size of the equality $EQ(x,y)$ and greater than function $GT(x,y)$ with $EQ(x,y) = 1 \Leftrightarrow \vert x \vert = \vert y \vert$ and $GT(x,y) = 1 \Leftrightarrow \vert x \vert > \vert y \vert$ is linear in the number of input bits for an interleaved variable order (see, \eg \cite{Wegener00}). It is also possible to construct the representing OBDDs for these functions in linear time. For the sake of code readability, we use $\vert x \vert = \vert y \vert$ and $\vert x \vert > \vert y \vert$ to denote $EQ(x,y)$ and $GT(x,y)$ in our algorithms. Furthermore, by $\vert x \vert > c$ ($\vert x \vert = c$) for some constant $c$ we denote the function $GT(x,y)_{\mid y = [c]_2}$ ($EQ(x,y)_{\mid y = [c]_2}$) where the $y$-variables are replaced by constants corresponding to the binary number $[c]_2$ of $c$.

Let $R(x,y) \in B_{2n}$ be a Boolean function. $R(x,y)$ can be seen as a binary relation $R$ on the set $\bset^n$ with $x\,R \,y \Leftrightarrow R(x,y) = 1$. The \emph{transitive closure} of this relation is the function $R^*(x,y)$ with $R^*(x,y) = 1$ iff there is a sequence $x = x_1,\ldots,x_l = y$ with $R(x_i,x_{i+1}) = 1$ for all $i=1,\ldots,l-1$. \Eg let $R(x,y) = \chi_E(x,y) \vee (\vert x \vert = \vert y \vert)$ be the function that returns $1$ iff there is an edge between $v_{\vert x \vert}$ and $v_{\vert y \vert}$ or $(\vert x \vert = \vert y \vert)$, then is $R^*(x,y) = 1$ iff there the nodes $v_{\vert x \vert}$ and $v_{\vert y \vert}$ are in the same connected component. The transitive closure can be computed implicitly by $O(n^2)$ functional operations using the so called iterative squaring or path doubling technique (see Algorithm \ref{algo:transclos}).

\begin{algorithm}[H]
\caption{$TransitiveClosure(R(x,y))$}
\label{algo:transclos}
\begin{algorithmic}[1]
\REQUIRE Boolean function $R(x,y) \in B_{2n}$
\ENSURE Transitive closure $R^*(x,y)$ of $R(x,y)$
\\
\STATE $R^*(x,y) = R(x,y)$\\
\FOR {$i=0$ to $n$} 
\STATE $R^*(x,y) = \exists z: R^*(x,z) \wedge R^*(z,y)$
\ENDFOR
\RETURN $R^*(x,y)$
\end{algorithmic}
\end{algorithm}
\noindent
Let $O(x,y)$ represent a total order $\prec$ on the input bitstrings, \ie $O(x,y) = 1 \Leftrightarrow x \prec y$ (\eg $O(x,y) = 1 \Leftrightarrow \vert x \vert \leq \vert y \vert$). Since $\prec$ is a total order, the input bitstrings can be sorted in ascending order according to $\prec$. Let $EO(x,l) = 1$ iff $x$ is in the $\vert l \vert$-the position in this sorted sequence. Similar to the transitive closure, it is known (see, \eg \cite{Sawitzki07}) that $EO(x,l)$ can be computed using $O(n^2)$ functional operations (see Algorithm \ref{algo:enumorder}).
\begin{algorithm}[H]
\caption{$EnumarateOrder(O(x,y))$}
\label{algo:enumorder}
\begin{algorithmic}[1]
\REQUIRE Total order $O(x,y) \in B_{2n}$
\ENSURE $EO(x,l)$ with $EO(x,l) = 1$ iff the rank of $x$ is $\vert l \vert$ in the ascending order 
\\
\COMMENT{Compute the pairs of direct successors}\\
\STATE $DS(x,y) = O(x,y) \wedge \overline{\exists z: O(x,z) \wedge O(z,y)}$\\
\COMMENT{$EO_i(x,y,l) = 1$ iff $\vert l \vert \leq 2^i$}\\
\COMMENT{and the distance between the position of $x$ and $y$ is equal to $\vert l \vert$}
\STATE $EO_0(x,y,l) = ((\vert l \vert = 0) \wedge (\vert x \vert = \vert y \vert) \vee ((\vert l \vert = 1) \wedge DS(x,y)))$\\
\COMMENT{Divide and conquer approach: If $2^{i-1} < \vert l \vert \leq 2^i$ then there has to be}\\
\COMMENT{an intermediate bitstring $z$ with distance $2^{i-1}$ to $x$ and $\vert l \vert - 2^{i-1}$ to $y$}\\
\FOR {$i=1$ to $n$}
\STATE $\begin{array}{rcl}
EO_i(x,y,l) & = & ((\vert l \vert \leq 2^{i-1}) \wedge EO_{i-1}(x,y,l)) \vee \left[ (2^{i-1} < \vert l \vert \leq 2^i)\: \wedge \right.\\
& & \left. \exists l_1,z: EO_{i-1}(x,z,2^{i-1}) \wedge EO_{i-1}(z,y,l_1) \wedge (\vert l_1 \vert + 2^{i-1} = \vert l \vert) \right]
\end{array}$
\ENDFOR\\
\COMMENT{Compute the rank according to the distance to the first element}\\
\STATE $EO(x,l) = \exists z: EO_{n}(z,x,l) \wedge \overline{\exists z': O(z',z)}$\\
\RETURN $EO(x,l)$
\end{algorithmic}
\end{algorithm}

\subsection{Interval Graphs}
\label{sec:pre_ig}
We start with a formal definition of (unit) interval graphs.
\begin{definition}[Interval Graph]
Let $\mathcal{I} = \lbrace [a_i,b_i] \mid a_i < b_i \text{ and } 0 \leq i \leq N-1 \rbrace$ be a set of $N \in \mathbb{N}$ intervals on the real line. The \emph{interval graph} $G_{\mathcal{I}} = (V,E)$ has one node for each interval in $\mathcal{I}$ and two nodes $v \neq w$ are adjacent iff the corresponding intervals intersect. If no interval is properly contained in another interval, $G_{\mathcal{I}}$ is called \emph{proper interval graph}. If the length of every interval in $\mathcal{I}$ is equal to $1$ then $G_{\mathcal{I}}$ is called \emph{unit interval graph} (see Fig.\,\ref{fig:intervalgraph}).
\end{definition}
\noindent
Notice that the set of all interval graphs does not change if we restrict ourselves to sets $\mathcal{I}$ where all endpoints are different because if there are two intervals with a shared endpoint then there exists an $\epsilon > 0$ such that moving the shared endpoint of one of the two intervals by $\epsilon$ generates the same interval graph. The definitions of proper and unit interval graphs are equivalent in the sense that they generate the same class of interval graphs \cite{Roberts69}. Hence, in the following we only use the term of unit interval graphs. An undirected graph $H$ is a (unit) interval graph iff there is a set of (unit) intervals $\mathcal{I}$ such that $H = G_\mathcal{I}$. Due to the one-to-one correspondence of the nodes of $G_{\mathcal{I}}$ and the elements of $\mathcal{I}$, we use the notion of node and interval synonymously.

\begin{figure}[ht]
\begin{center}
\begin{tikzpicture}
	\draw[line width=1.5pt] (0,0) to node[sloped,below] {{\footnotesize 0}} (1,0);
	\draw[line width=1.5pt] (0.5,0.5) to node[sloped,below] {{\footnotesize 1}} (1.5,0.5);
	\draw[line width=1.5pt] (0.75, 1) to node[sloped,below] {{\footnotesize 2}} (1.75,1);
	\draw[line width=1.5pt] (1.5,0) to node[sloped,below] {{\footnotesize 3}} (2.5,0);
	\draw[line width=1.5pt] (2,0.5) to node[sloped,below] {\footnotesize{4}} (3,0.5);
	\draw[line width=1.5pt] (2.5,1) to node[sloped,below] {{\footnotesize 5}} (3.5,1);
	\draw[line width=1.5pt] (4,0) to node[sloped,below] {{\footnotesize 6}} (5,0);
	\draw[line width=1.5pt] (4.5,0.5) to node[sloped,below] {{\footnotesize 7}} (5.5,0.5);
\end{tikzpicture}
\begin{tikzpicture}
	\node[tnode] (0) at (0,0) {0};
    \node[tnode] (1) at (1,1) {1};
    \node[tnode] (2) at (1,0) {2};
    \node[tnode] (3) at (2,1) {3};
    \node[tnode] (4) at (2,0) {4};
    \node[tnode] (5) at (3,0) {5};
    \node[tnode] (6) at (4,0) {6};
    \node[tnode] (7) at (5,0) {7};
    
    \draw (0) -- (1);

    \draw (0) -- (2);
    \draw (1) -- (2);
    \draw (1) -- (3);
    \draw (2) -- (3);
    \draw (3) -- (4);
    \draw (3) -- (5);
    \draw (4) -- (5);
    \draw (6) -- (7);
\end{tikzpicture}
\end{center}
\caption{Example of a set of unit intervals and the corresponding unit interval graph}
\label{fig:intervalgraph}
\end{figure}
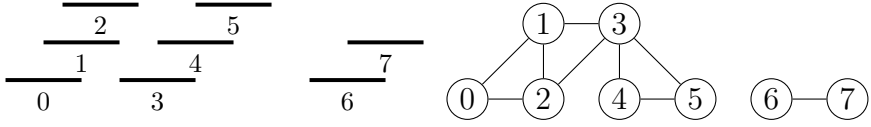

\section{OBDD Size of Interval Graphs}
\label{sec:sizeintervalgraph}

In order to bound the size of a function $f$ using Theorem \ref{thm:minimal_obdd}, we have to count different subfunctions of $f$. We present a way to count the subfunctions of the characteristic function $f := \chi_E$ of the edge set of a graph using the adjacency matrix of the graph. This can give us a better understanding what a subfunction looks like in the graph scenario and get a more graph theoretic approach to subfunctions. 

The adjacency matrix of graphs from special graph classes (\eg for interval graphs) yields some structural properties which we use to bound the size of the $\pi$-OBDD. As we know from the last section, the OBDD size is dependent on the labeling of the nodes. So if we use the knowledge about the structure of the adjacency matrix for a fixed labeling to bound the number of different subfunctions for a variable order $\pi$, we can show an upper and/or lower bound of the $\pi$-OBDD size. 

The rows (columns) of an adjacency matrix correspond to the $x$-variables ($y$-variables) of $\chi_E(x,y)$. We can sort the rows of the adjacency matrix according to a variable order $\pi$ by connecting the $i$-th row to the input $x$ with $\sum_{l=0}^{n-1} x_{\pi(n-l-1)} \cdot 2^l = i$, \ie we let the $l$-th $x$-variable in $\pi$ have significance $2^{n-l-1}$ to sort the rows. This can be done analogously to sort the columns. Thus, the variable order $\pi$ defines a permutation of the rows and columns of the adjacency matrix resulting in a new matrix which we call $\pi$-ordered adjacency matrix.

\begin{definition}
Let $G = (V,E)$ be a graph and $\pi := \pi_{2,n}$ be a $2$-interleaved variable order for the characteristic function $f := \chi_E$. The \emph{$\pi$-ordered} adjacency matrix $A_{\pi}$ of $G$ is defined as follows: $a_{ij} = 1$ iff $f(x,y) = 1$ with $\sum_{l=0}^{n-1} x_{\pi(n-l-1)} \cdot 2^l = i$ and $\sum_{l=0}^{n-1} y_{\pi(n-l-1)} \cdot 2^l = j$. 
\end{definition}
\noindent
Notice that the $\pi$-ordered adjacency matrix is equal to the \enquote{normal} adjacency matrix where the rows and columns are sorted by the node labels iff the variables in $\pi$ are tested with decreasing significance. The $\pi$-ordered adjacency matrix gives us a visualization of the subfunctions in terms of \emph{blocks} of the matrix.

\begin{definition}
Let $n \in \mathbb{N}$ and $A$ be a $2^n \times 2^n$ matrix. For $0 \leq k \leq n$ and $0 \leq i,j \leq 2^k-1$ the \emph{block} $B^k_{ij}$ of $A$ is defined by the quadratic submatrix of size $2^n/2^k \times 2^n/2^k$ which is formed by the intersection of the rows $i \cdot 2^n/2^k, \ldots, (i+1)\cdot 2^n/2^k-1$ and the columns $j \cdot 2^n/2^k, \ldots, (j+1) \cdot 2^n/2^k-1$.
\end{definition}
\noindent
Recall Theorem \ref{thm:minimal_obdd} that we want to count the number of different subfunctions which result from replacing the first $i$ variables according to $\pi$ by constants. We will see later that for an upper bound it is enough to consider only the case when $i$ is even, \ie the number of replaced $x$- and $y$-variables is exactly $i/2$. Now, we can see that the block $B^{i/2}_{\vert a \vert, \vert b \vert}$ represents the function table of the subfunction which results from replacing the $x$-variables by $a \in \bset^{i/2}$ and the $y$-variables by $b \in \bset^{i/2}$. Therefore, counting the number of different blocks $B^{i/2}_{\vert a \vert, \vert b \vert}$ is equivalent to counting the number of different subfunctions.

For instance, let say that the variables are tested with decreasing significance. Then $a_{ij} = 1$ iff $f(x,y) = 1$ with $\sum_{l=0}^{n-1} x_{l} \cdot 2^l = \vert x \vert = i$ and $\sum_{l=0}^{n-1} y_{l} \cdot 2^l = \vert y \vert = j$, \ie the $\pi$-ordered adjacency matrix $A_\pi$ of $G$ is the standard adjacency matrix where the labeling of the columns and rows is ordered by the node labels. In Fig.\,\ref{fig:piordered_matrix} we can see that for every $k$ each subfunction of $f$ where the first $k$ bits (according to $\pi_{2,n}$) are replaced by constants corresponds to a block of this adjacency matrix. 

Bollig and Wegener \cite{BolligW00} use a similar approach to visualize subfunctions of a storage access function by building a matrix whose columns and rows are sorted according to the variable order and correspond to variables (not assignments as in our $\pi$-ordered matrix). Notice that $A_\pi$ is not the communication matrix which is often used to show lower bounds of the OBDD size.

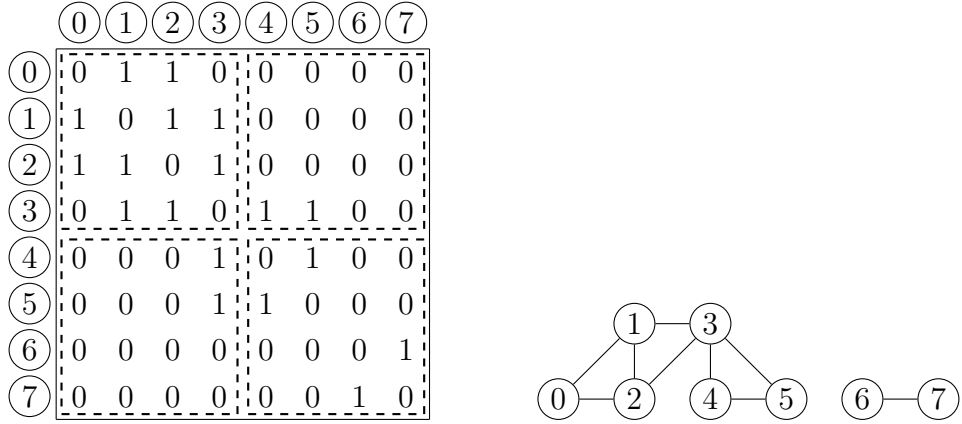
\begin{figure}
\begin{center}
\begin{tikzpicture}
	\node[entry] (00) at (0,0) {0};
    \node[entry,below=0pt of 00] (01) {1};
    \node[entry,below=0pt of 01] (02) {1};
    \node[entry,below=0pt of 02] (03) {0};
    \node[entry,below=0pt of 03] (04) {0};
    \node[entry,below=0pt of 04] (05) {0};
    \node[entry,below=0pt of 05] (06) {0};
    \node[entry,below=0pt of 06] (07) {0};
    
	\node[entry,right=0pt of 00] (10) {1};
    \node[entry,below=0pt of 10] (11) {0};
    \node[entry,below=0pt of 11] (12) {1};
    \node[entry,below=0pt of 12] (13) {1};
    \node[entry,below=0pt of 13] (14) {0};
    \node[entry,below=0pt of 14] (15) {0};
    \node[entry,below=0pt of 15] (16) {0};
    \node[entry,below=0pt of 16] (17) {0};
    
  	\node[entry,right=0pt of 10] (20) {1};
    \node[entry,below=0pt of 20] (21) {1};
    \node[entry,below=0pt of 21] (22) {0};
    \node[entry,below=0pt of 22] (23) {1};
    \node[entry,below=0pt of 23] (24) {0};
    \node[entry,below=0pt of 24] (25) {0};
    \node[entry,below=0pt of 25] (26) {0};
    \node[entry,below=0pt of 26] (27) {0};
    
   	\node[entry,right=0pt of 20] (30) {0};
    \node[entry,below=0pt of 30] (31) {1};
    \node[entry,below=0pt of 31] (32) {1};
    \node[entry,below=0pt of 32] (33) {0};
    \node[entry,below=0pt of 33] (34) {1};
    \node[entry,below=0pt of 34] (35) {1};
    \node[entry,below=0pt of 35] (36) {0};
    \node[entry,below=0pt of 36] (37) {0};
    
   	\node[entry,right=0pt of 30] (40) {0};
    \node[entry,below=0pt of 40] (41) {0};
    \node[entry,below=0pt of 41] (42) {0};
    \node[entry,below=0pt of 42] (43) {1};
    \node[entry,below=0pt of 43] (44) {0};
    \node[entry,below=0pt of 44] (45) {1};
    \node[entry,below=0pt of 45] (46) {0};
    \node[entry,below=0pt of 46] (47) {0};
    
   	\node[entry,right=0pt of 40] (50) {0};
    \node[entry,below=0pt of 50] (51) {0};
    \node[entry,below=0pt of 51] (52) {0};
    \node[entry,below=0pt of 52] (53) {1};
    \node[entry,below=0pt of 53] (54) {1};
    \node[entry,below=0pt of 54] (55) {0};
    \node[entry,below=0pt of 55] (56) {0};
    \node[entry,below=0pt of 56] (57) {0};
        
   	\node[entry,right=0pt of 50] (60) {0};
    \node[entry,below=0pt of 60] (61) {0};
    \node[entry,below=0pt of 61] (62) {0};
    \node[entry,below=0pt of 62] (63) {0};
    \node[entry,below=0pt of 63] (64) {0};
    \node[entry,below=0pt of 64] (65) {0};
    \node[entry,below=0pt of 65] (66) {0};
    \node[entry,below=0pt of 66] (67) {1};
        
   	\node[entry,right=0pt of 60] (70) {0};
    \node[entry,below=0pt of 70] (71) {0};
    \node[entry,below=0pt of 71] (72) {0};
    \node[entry,below=0pt of 72] (73) {0};
    \node[entry,below=0pt of 73] (74) {0};
    \node[entry,below=0pt of 74] (75) {0};
    \node[entry,below=0pt of 75] (76) {1};
    \node[entry,below=0pt of 76] (77) {0}; 
    
    \node[tnode] (c0) at ($ (00.north)+(0,0.35) $) {0}; 
    \node[tnode] (c1) at ($ (10.north)+(0,0.35) $) {1}; 
    \node[tnode] (c2) at ($ (20.north)+(0,0.35) $) {2}; 
    \node[tnode] (c3) at ($ (30.north)+(0,0.35) $) {3}; 
    \node[tnode] (c4) at ($ (40.north)+(0,0.35) $) {4}; 
    \node[tnode] (c5) at ($ (50.north)+(0,0.35) $) {5}; 
    \node[tnode] (c6) at ($ (60.north)+(0,0.35) $) {6}; 
    \node[tnode] (c7) at ($ (70.north)+(0,0.35) $) {7};         

    \node[tnode] (r0) at ($ (00.west)+(-0.35,0) $) {0}; 
    \node[tnode] (r1) at ($ (01.west)+(-0.35,0) $) {1}; 
    \node[tnode] (r2) at ($ (02.west)+(-0.35,0) $) {2}; 
    \node[tnode] (r3) at ($ (03.west)+(-0.35,0) $) {3}; 
    \node[tnode] (r4) at ($ (04.west)+(-0.35,0) $) {4}; 
    \node[tnode] (r5) at ($ (05.west)+(-0.35,0) $) {5}; 
    \node[tnode] (r6) at ($ (06.west)+(-0.35,0) $) {6}; 
    \node[tnode] (r7) at ($ (07.west)+(-0.35,0) $) {7};             
   
    \rectd{00}{03}{33}{30}{2}
    
    \rectd{04}{07}{37}{34}{2}
    
    \rectd{40}{43}{73}{70}{2}    

    \rectd{44}{47}{77}{74}{2}
    
    \draw (00.north west) -- (07.south west);
    \draw (00.north west) -- (70.north east);
    \draw (07.south west) -- (77.south east);
    \draw (70.north east) -- (77.south east);
\end{tikzpicture}
\hspace{1cm}
\begin{tikzpicture}
	\node[tnode] (0) at (0,0) {0};
    \node[tnode] (1) at (1,1) {1};
    \node[tnode] (2) at (1,0) {2};
    \node[tnode] (3) at (2,1) {3};
    \node[tnode] (4) at (2,0) {4};
    \node[tnode] (5) at (3,0) {5};
    \node[tnode] (6) at (4,0) {6};
    \node[tnode] (7) at (5,0) {7};
    
    \draw (0) -- (1);

    \draw (0) -- (2);
    \draw (1) -- (2);
    \draw (1) -- (3);
    \draw (2) -- (3);
    \draw (3) -- (4);
    \draw (3) -- (5);
    \draw (4) -- (5);
    \draw (6) -- (7);
\end{tikzpicture}
\end{center}
\caption{The $\pi$-ordered adjacency matrix of the unit interval graph from Fig.\,\ref{fig:intervalgraph} with $\pi = (2,1,0)$ and framed blocks $B^1_{0,0}, B^1_{1,0}, B^1_{0,1}$ and $B^1_{1,1}$ which correspond to the subfunctions where the first bit is replaced by a constant.}
\label{fig:piordered_matrix}
\end{figure}

Next, we use the $\pi$-ordered adjacency matrix and count the number of different blocks to improve the bounds of the OBDD size of interval graphs. 

\begin{theorem}
\label{thm:ub_sizeintervalgraph}
Let $\pi := \pi_{2,n}$ be the interleaved variable order with decreasing significance and $G = (V,E)$ be an interval graph with $N := \vert V \vert$ nodes. The $\pi$-OBDD size of $\chi_E$ can be bounded above by $O(N \log N)$.
\end{theorem}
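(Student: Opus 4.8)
The plan is to combine Theorem~\ref{thm:minimal_obdd} with the block description of subfunctions developed above, so that bounding the $\pi$-OBDD size reduces to counting distinct blocks of the $\pi$-ordered adjacency matrix $A_\pi$.

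\textbf{Setup.} First I fix the node labeling: index the intervals $v_0,\dots,v_{N-1}$ by increasing left endpoint (we may assume all endpoints are distinct). Since $\pi$ tests the variables with decreasing significance, $A_\pi$ is the ordinary adjacency matrix with rows and columns in this order. For $r<c$ the intervals $v_r$ and $v_c$ intersect iff the left endpoint of $v_c$ lies inside $v_r$, i.e.\ iff $c\le f(r)$, where $f(r):=\max\bigl(\{r\}\cup\{\,c: v_c \text{ is adjacent to } v_r\,\}\bigr)$. Because the left endpoints increase, this is exactly the staircase shape of interval adjacency matrices used by Mertzios~\cite{Mertzios08}: in the strict upper triangle every row is a (possibly empty) run of $1$'s starting immediately right of the diagonal. (Nested intervals prevent $f$ from being monotone, so this is essentially all the structure available.)

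\textbf{Reduction to block counting.} Write $m_j:=2^{n-j}$. By the block correspondence, after replacing the $j$ most significant $x$-bits and $y$-bits the resulting subfunction has a block $B^j_{\cdot,\cdot}$ (of size $m_j\times m_j$) as its function table, and replacing one further $x$-bit restricts this to its top, resp.\ bottom, half of rows. Hence, via Theorem~\ref{thm:minimal_obdd}, the number $s_{2j}$ of $x_{n-1-j}$-nodes in the minimal $\pi$-OBDD is at most the number of distinct blocks at level $j$, and the number $s_{2j+1}$ of $y_{n-1-j}$-nodes is at most twice that number. So it suffices to show that the number of distinct blocks $B^j_{I,J}$ meeting the top-left $N\times N$ corner of $A_\pi$ is $O(N)$ for every $j$; summing the contribution $s_{2j}+s_{2j+1}$ over the $n=\lceil\log N\rceil$ significance levels then gives $O(N\log N)$.

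\textbf{Counting blocks at a fixed level $j$.} Diagonal blocks ($I=J$) meet the corner only for $I<\lceil N/m_j\rceil$, so they contribute $O(N/m_j)$ distinct blocks (plus the all-zero block). Since $A_\pi$ is symmetric, every block strictly below the diagonal is the transpose of one strictly above it, so it remains to bound the blocks with $I<J$. For such a block, with row range $R_I=[Im_j,(I+1)m_j)$ and column range $[Jm_j,(J+1)m_j)$, every entry lies in the strict upper triangle, so row $r\in R_I$ of the block equals $1^{e_J(r)}0^{\,m_j-e_J(r)}$, where $e_J(r)$ is $f(r)-Jm_j+1$ clamped to $[0,m_j]$. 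The key point is a monotone ``sliding window'' phenomenon: for fixed $I$, as $J$ grows the window $[Jm_j,(J+1)m_j)$ slides right in steps of $m_j$, each coordinate $e_J(r)$ is non-increasing in $J$, and a short case analysis shows $B^j_{I,J}=B^j_{I,J+1}$ unless some $r\in R_I$ has $f(r)\in[Jm_j,(J+2)m_j)$; since each of the $m_j$ values $f(r)$, $r\in R_I$, lies in such a length-$2m_j$ window for at most two consecutive $J$, there are $O(m_j)$ distinct blocks with $I<J$ and row range $R_I$. As only $O(N/m_j)$ row ranges meet the corner, the total is $O(N+m_j)=O(N)$, using $m_j\le 2^n=O(N)$. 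Adding the diagonal and (by transposition) the strictly-lower contributions, level $j$ has $O(N)$ distinct blocks, which closes the argument.

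\textbf{Main obstacle.} Identifying subfunctions with blocks and row-halves is bookkeeping, and the per-coordinate clamp arithmetic is routine. The one substantive step is the sliding-window lemma bounding the number of distinct strictly-above-diagonal blocks sharing a row range by $O(m_j)$ rather than the trivial $2^{O(m_j)}$; this is precisely where the interval staircase structure is used, and the case distinction for when two consecutive blocks coincide is the place I would be most careful to get exactly right.
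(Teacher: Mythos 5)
Your proposal is correct and takes essentially the same route as the paper's proof: you identify subfunctions with blocks of the $\pi$-ordered adjacency matrix under the left-endpoint labeling, use the staircase (1-run) structure of interval adjacency, and show each significance level contributes only $O(N)$ distinct blocks, summing to $O(N\log N)$. The only cosmetic difference is that you slide the block-column index within a fixed block row (each run-end $f(r)$ causing at most two jumps), whereas the paper scans block rows within a fixed block column and counts changing positions per matrix column — the same bookkeeping up to transposition and symmetry.
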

\begin{proof}
Let $f := \chi_E$, $1 \leq k \leq n$ and $s_k$ be the number of different subfunctions $f_{\mid \alpha, \beta}$ of $f$ where $\alpha \in \bset^k$ is an assignment to the variables $x_{n-1},\ldots,x_{n-k}$ and $\beta \in \bset^k$ is an assignment to the variables $y_{n-1},\ldots, y_{n-k}$, respectively. The number of different subfunctions where the variables $x_{n-1},\ldots,x_{n-k}$ and $y_{n-1},\ldots, y_{n-k-1}$  are replaced by constants can be bounded by $2 \cdot s_k$ because one additional variable can at most double the number of subfunctions.

We label the nodes according to their position in the sorted sequence of interval starting points (as for example in Fig.\,\ref{fig:intervalgraph}). Recall that the interleaved variable order with decreasing significance means that $a_{i,j}$ is one if and only if interval $i$ intersects interval $j$. Now, notice that if $a_{i,j}$ is zero for $j>i$, \ie interval $j$ has a larger starting point than interval $i$ and does not cut interval $i$, then no interval $j'>j$ with a larger starting point can cut interval $i$. Thus, for every column $i \in \{0,\ldots,N-1\}$, the sequence $(a_{i+1,i},\ldots,a_{N-1,i})$ is zero or starts with a continuous sequence of ones followed by only zeros, \ie\ there exists a $j$ such that $a_{k,i}=1$ for $i < k \le j$ and $a_{k,i}=0$ for $ k > j$.

As seen in the beginning of this section, every subfunction  $f_{\mid \alpha, \beta}$ corresponds to a block of $A_{\pi}$. Let $\beta = 0^k$ and $\vert \alpha \vert \geq 1$, \ie we consider the blocks $B^{k}_{\vert \alpha \vert, 0}$ of size $2^{n-k} \times 2^{n-k}$ (see Fig.\,\ref{fig:IGAdj}). As we observed, every column of $A_{\pi}$ has (below the diagonal) at most one possible \emph{changing} position $k$ such that $a_{k,i} = 1$ and $a_{k+1,i} = 0$. Looking at the sequence $(B^k_{1,0},\ldots,B^k_{2^k-1,0})$ of blocks, this fact implies that a block $B^k_{i,0}$ can only form a new block, \ie all previous block in the sequence are different to this block, if there is a changing position in one column inside of $B^k_{i,0}$ or inside the block $B^k_{i-1,0}$ or between these two blocks. Therefore, every changing position can induce at most two different blocks and, therefore, we can bound the number of different blocks by two times the number of possible changing positions which is at most the number of columns of a block, \ie $2 \cdot 2^{n-k}$. Since the graph is symmetric and the blocks containing the diagonal can only add $2^k$ additional distinct blocks, we can bound the overall number of different blocks by $O(2^{n-k} \cdot 2^k+2^k) = O(2^n)$ and thus $s_k = O(2^n)$. Summing this up over all possible values of $k$ we get $O(2^n \cdot n) = O(N \log N)$ as an upper bound on the size of the $\pi$-OBDD.
\end{proof}

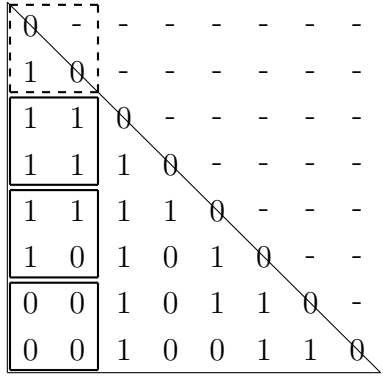
\begin{figure}[t]
\begin{center}
\begin{tikzpicture}
	\node[entry] (00) at (0,0) {0};
    \node[entry,below=0pt of 00] (01) {1};
    \node[entry,below=0pt of 01] (02) {1};
    \node[entry,below=0pt of 02] (03) {1};
    \node[entry,below=0pt of 03] (04) {1};
    \node[entry,below=0pt of 04] (05) {1};
    \node[entry,below=0pt of 05] (06) {0};
    \node[entry,below=0pt of 06] (07) {0};
    
	\node[entry,right=0pt of 00] (10) {-};
    \node[entry,below=0pt of 10] (11) {0};
    \node[entry,below=0pt of 11] (12) {1};
    \node[entry,below=0pt of 12] (13) {1};
    \node[entry,below=0pt of 13] (14) {1};
    \node[entry,below=0pt of 14] (15) {0};
    \node[entry,below=0pt of 15] (16) {0};
    \node[entry,below=0pt of 16] (17) {0};
    
  	\node[entry,right=0pt of 10] (20) {-};
    \node[entry,below=0pt of 20] (21) {-};
    \node[entry,below=0pt of 21] (22) {0};
    \node[entry,below=0pt of 22] (23) {1};
    \node[entry,below=0pt of 23] (24) {1};
    \node[entry,below=0pt of 24] (25) {1};
    \node[entry,below=0pt of 25] (26) {1};
    \node[entry,below=0pt of 26] (27) {1};
    
   	\node[entry,right=0pt of 20] (30) {-};
    \node[entry,below=0pt of 30] (31) {-};
    \node[entry,below=0pt of 31] (32) {-};
    \node[entry,below=0pt of 32] (33) {0};
    \node[entry,below=0pt of 33] (34) {1};
    \node[entry,below=0pt of 34] (35) {0};
    \node[entry,below=0pt of 35] (36) {0};
    \node[entry,below=0pt of 36] (37) {0};
    
   	\node[entry,right=0pt of 30] (40) {-};
    \node[entry,below=0pt of 40] (41) {-};
    \node[entry,below=0pt of 41] (42) {-};
    \node[entry,below=0pt of 42] (43) {-};
    \node[entry,below=0pt of 43] (44) {0};
    \node[entry,below=0pt of 44] (45) {1};
    \node[entry,below=0pt of 45] (46) {1};
    \node[entry,below=0pt of 46] (47) {0};
    
   	\node[entry,right=0pt of 40] (50) {-};
    \node[entry,below=0pt of 50] (51) {-};
    \node[entry,below=0pt of 51] (52) {-};
    \node[entry,below=0pt of 52] (53) {-};
    \node[entry,below=0pt of 53] (54) {-};
    \node[entry,below=0pt of 54] (55) {0};
    \node[entry,below=0pt of 55] (56) {1};
    \node[entry,below=0pt of 56] (57) {1};
        
   	\node[entry,right=0pt of 50] (60) {-};
    \node[entry,below=0pt of 60] (61) {-};
    \node[entry,below=0pt of 61] (62) {-};
    \node[entry,below=0pt of 62] (63) {-};
    \node[entry,below=0pt of 63] (64) {-};
    \node[entry,below=0pt of 64] (65) {-};
    \node[entry,below=0pt of 65] (66) {0};
    \node[entry,below=0pt of 66] (67) {1};
        
   	\node[entry,right=0pt of 60] (70) {-};
    \node[entry,below=0pt of 70] (71) {-};
    \node[entry,below=0pt of 71] (72) {-};
    \node[entry,below=0pt of 72] (73) {-};
    \node[entry,below=0pt of 73] (74) {-};
    \node[entry,below=0pt of 74] (75) {-};
    \node[entry,below=0pt of 75] (76) {-};
    \node[entry,below=0pt of 76] (77) {0};  
    
    
    \rectd{00}{01}{11}{10}{1}
    
    \rect{02}{03}{13}{12}{1}
    
    \rect{04}{05}{15}{14}{1}    

    \rect{06}{07}{17}{16}{1}
    
    \draw (00.north west) -- (77.south east);
    \draw (00.north west) -- (07.south west);
    \draw (07.south west) -- (77.south east);
\end{tikzpicture}
\end{center}
\caption{Possible adjacency matrix with $8$ nodes and framed subfunctions $f_{\mid \alpha, \beta}$ with $\beta = 0^k$, $\vert \alpha \vert \geq 1$, and $k=2$.}
\label{fig:IGAdj}
\end{figure}
\noindent
In the case of unit interval graph Nunkesser and Woelfel \cite{NuWo09} proved that the OBDD size is bounded below by $\Omega(N/\log N)$ and above by $O(N/\sqrt{\log N})$. We can close this gap by using the $\pi$-ordered adjacency matrix to get a better upper bound on the number of subfunctions of $\chi_E$ for large values of $k$.

\begin{theorem}
\label{thm:sizeunitintervalgraph}
Let $\pi$ be the interleaved variable order with decreasing significance. The $\pi$-OBDD size of $\chi_E$ for a unit interval graph $G = (V,E)$ is $O(N / \log N)$.
\end{theorem}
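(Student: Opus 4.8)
\noindent
The plan is to reuse the setup of Theorem~\ref{thm:ub_sizeintervalgraph}. By Theorem~\ref{thm:minimal_obdd} together with the observation that fixing one more variable at most doubles the number of subfunctions, the $\pi$-OBDD size is $O\!\left(\sum_{k=1}^{n} s_k\right)$, where $s_k$ is the number of distinct blocks $B^{k}_{\cdot,\cdot}$ of the $\pi$-ordered adjacency matrix $A_\pi$ (which, for this $\pi$, is the adjacency matrix with rows and columns sorted by the left endpoints); so it remains to prove $\sum_k s_k = O(N/\log N)$. Write $m := 2^{n-k}$ for the side length of a level-$k$ block. Recall from the proof of Theorem~\ref{thm:ub_sizeintervalgraph} that the set of $1$-entries of $A_\pi$ has the form $\{(i,j)\mid l_i\le j\le r_i\}$ for indices $l_i\le i\le r_i$. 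The new ingredient I would exploit is that for \emph{unit} intervals the right endpoints $b_i=a_i+1$ increase with $i$, so $r_i=\max\{j\mid a_j\le b_i\}$ is non-decreasing in $i$, and symmetrically $l_i$ is non-decreasing; thus the $1$-region is a ``doubly monotone staircase''. This is exactly the property that fails for general interval graphs.

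\noindent
I would then establish two independent upper bounds on $s_k$ and keep the smaller. \emph{First bound: $s_k=O(2^k)$.} A block $B^k_{b,c}$ is constant unless the monotone boundary $r$ (for $c>b$) or $l$ (for $c<b$) passes through it, or $c=b$. For $c>b$ the block is non-constant only if $r_{(b+1)m-1}\ge cm$ and $r_{bm}\le cm+m-2$, hence, for a fixed row-block $b$, only for $\Delta_b/m+O(1)$ column-blocks, where $\Delta_b:=r_{(b+1)m-1}-r_{bm}$. Since $r$ is non-decreasing the $\Delta_b$ telescope, $\sum_b\Delta_b=O(2^n)$, and summing over the $2^k$ row-blocks gives $O(2^n/m+2^k)=O(2^k)$ non-constant blocks above the diagonal; symmetrically $O(2^k)$ below it (via $l$), plus the $2^k$ diagonal blocks. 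Hence $s_k=O(2^k)$. (For general interval graphs this step only gives $O(2^n)$ because the $\Delta_b$ need not telescope, which is the source of the extra $\log N$ factor in Theorem~\ref{thm:ub_sizeintervalgraph}.)

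\noindent
\emph{Second bound: $s_k=2^{O(2^{n-k})}$.} Every level-$k$ block is the table of a subfunction of the shape $a_{ij}=[\,l_i\le j\le r_i\,]$ restricted to an $m\times m$ window; clamping the monotone sequences $(l_i)$ and $(r_i)$ to the column range of the block yields two non-decreasing sequences of length $m$ with values in $\{0,\dots,m\}$, so the block is determined by at most $\binom{2m}{m}^{2}<16^{m}$ data. Therefore $s_k\le 16^{m}=2^{O(2^{n-k})}$.

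\noindent
Finally I would combine: $s_k=O\!\left(\min\{2^{k},\,2^{c\cdot 2^{n-k}}\}\right)$ for an absolute constant $c$. The two expressions agree around $k^{\ast}=n-\Theta(\log n)$, where both equal $\Theta(2^n/n)$. For $k\le k^{\ast}$ the geometric sum $\sum_k 2^{k}$ is dominated by its last term $O(2^{k^{\ast}})=O(2^n/n)$; for $k>k^{\ast}$ the bound $2^{c\cdot 2^{n-k}}$ is square-rooted each time $k$ increases by one, so $\sum_{k>k^{\ast}}2^{c\cdot 2^{n-k}}=O(2^{n/2})=o(2^n/n)$. Adding the $O(n)$ from the at most two constant blocks per level, we get $\sum_k s_k=O(2^n/n)=O(N/\log N)$, matching the $\Omega(N/\log N)$ lower bound of Nunkesser and Woelfel. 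The main obstacle I expect is the first bound: verifying that $r$ really is non-decreasing for unit interval graphs and organizing the count so that only the $2^k$ diagonal blocks (and not more) dominate; the second bound and the final balancing are routine, but it is precisely the $2^{O(2^{n-k})}$ estimate on \emph{small} blocks that removes the $\log N$ factor the first bound alone would leave.
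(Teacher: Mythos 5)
Your proposal is correct and follows essentially the same route as the paper's proof: count distinct blocks of the $\pi$-ordered adjacency matrix, use the fact that for unit intervals the run-lengths within a block are monotone to bound the number of distinct $2^{n-k}\times 2^{n-k}$ blocks by $2^{O(2^{n-k})}$ (the paper counts one monotone sequence per off-diagonal block, getting $\binom{2C}{C}\le 2^{2C}$, versus your $\binom{2m}{m}^2$), then take the minimum with an $O(2^k)$ per-level bound and balance the two at $k\approx n-\log n$. The only genuine difference is that the paper imports the per-level bound $s_k\le 2^{k+2}-2$ from Nunkesser and Woelfel, whereas you re-derive an $O(2^k)$ bound from scratch via the staircase/telescoping argument (which is fine and self-contained); note only the small mis-attribution that the row structure $\{(i,j)\mid l_i\le j\le r_i\}$ is not established in the proof of Theorem~\ref{thm:ub_sizeintervalgraph} and indeed fails for general interval graphs, but it does hold for unit interval graphs by the endpoint monotonicity you invoke, which is all your argument needs.
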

\begin{proof}
Again, let $f := \chi_E$ and $s_k$ be the number of different subfunctions $f_{\mid \alpha, \beta}$ of $f$ where $\alpha$ is an assignment to the variables $x_{n-1},\ldots,x_{n-k}$ and $\beta$ is an assignment to the variables $y_{n-1},\ldots, y_{n-k}$ respectively. As we have seen in the last proof, the number of different subfunctions where the first $k$ $x$-variables and $k+1$ $y$-variables are replaced by constants can be bounded by $2 \cdot s_k$.

We label the nodes according to their interval starting points. We know that $f_{\mid \alpha, \beta}$ corresponds to the block $B^k_{\vert \alpha \vert,\vert \beta \vert}$ of the $\pi$-ordered adjacency matrix of $G$. Let $\vert \alpha \vert > \vert \beta \vert$. Every column of these blocks consists of a beginning sequence of ones of length $l \geq 0$ and an ending sequence of zeros of length $C-l$, where $C = 2^{n-k}$ is the number of rows and columns of $B^k_{\vert \alpha \vert, \vert \beta \vert}$. Let $l_1,\ldots,l_C$ be the lengths of the beginning sequence consisting of ones of every column in the block $B^k_{\vert \alpha \vert, \vert \beta \vert}$. Recall that the intervals are labeled according to their interval starting point. Since $G$ is a unit interval graph, this is equivalent to labeling them according to their interval ending points, \ie if $j>i$, then interval $j$ starts \emph{and} ends after interval $i$. This implies that the sequence $l_1, \ldots, l_C$ is monotonically increasing, \ie $l_1 \leq l_2 \leq \ldots \leq l_C$. How many different blocks of this form can be constructed? We can construct such a block by drawing $C$ numbers between $0$ and $C$ and sorting them, \ie it is equivalent to selecting $C$ numbers out of $\lbrace 0,\ldots,C \rbrace$ with repetition, where order does not matter. The number of $C$-combinations with repetition is equal to $\binom{(C+1)+C-1}{C} = \binom{2C}{C}$ and this can be bounded above by $2^{2C}$. Since $G$ is symmetric, this is also a bound on the number of different blocks above the diagonal. 
The omitted blocks on the diagonal can be constructed in a similar way: At first, the diagonal of these blocks is zero and the blocks are symmetric. Below the diagonal the blocks also consist of a sequence of ones probably followed by a sequence of zeros. So the number of different blocks is bounded above by the number of different blocks, which are not on the diagonal, \ie by $2^{2C}$. Hence, for $C = 2^{n-k}$ we can bound $s_k$ above by $3 \cdot 2^{2^{n-k+1}}$. 
Nunkesser and Woelfel \cite{NuWo09} also showed that $s_k \leq 2^{k+2}-2$. Therefore, the OBDD size is at most
$$ \begin{array}{rcl}
\sum\limits_{k=0}^{n-1} \min \lbrace 2^{k+2}, 3 \cdot 2^{2^{n-k+1}} \rbrace & \leq & \sum\limits_{k=0}^{n-\log n +1} 2^{k+2} + 3 \cdot \sum\limits_{k=n-\log n +2}^{n-1} 2^{2^{n-k+1}}\\
& \leq & 2^{n-\log n +4} + 3 \cdot 2^{2^{\log n -1}}\cdot (\log n - 2)\\
& = & O(N/\log N) + O(\sqrt{N} \cdot \log \log N)\\
& = & O(N/\log N).
\end{array}$$
\end{proof}
\noindent
The difference between unit and general interval graphs is that in general interval graphs there is no dependence between the columns of the $\pi$-ordered adjacency matrix, which is important for our lower bound, while in unit interval graphs, the row number of the last $1$ entry in a column is increasing from left to right. 

The proofs of the upper bounds suggest that the number of blocks $B_{i,j}^k$ with a changing position roughly determines the number of OBDD nodes labeled by $x_{n-k-1}$. We know that every layer of the OBDD, \ie every set of OBDD nodes labeled by the same variable, has size $O(N)$ which means that there has to be $\Omega(n)$ layers of the OBDD of size $\Omega(N)$ to show a lower bound of $\Omega(N \log N)$. Explicitly constructing a worst-case interval graph with OBDD size of $\Omega(N \log N)$ is difficult because $\Omega(n)$ layers correspond to $\Omega(n)$ values of $k$ and, since the block $B_{i,j}^k$ results from dividing a block $B_{i',j'}^{k-1}$, many dependencies have to be considered to ensure that $\Omega(N)$ blocks are different for all the possible values of $k$. 

In order to overcome these dependencies, we look at a random interval graph and compute the expected value of the number of different blocks for $\Omega(n)$ values of $k$. 
Intuitively, in the worst-case the lengths of the $1$-sequences of the columns are uniformly distributed such that there are many blocks with a small number of changing positions inside which maximizes the possibility that there are many different blocks. Choosing an appropriate distribution on the set of interval graphs, we show that the expected number of different blocks with one changing position is $\Omega(N)$ for $\Omega(n)$ values of $k$. Due to the linearity of expectation, the expected value of the sum of the number of different blocks over all values of $k$ is $\Omega(N n ) = \Omega(N \log N)$, \ie there is an interval graph whose OBDD size is also $\Omega(N \log N)$.
\begin{figure}
\begin{center}
\begin{tikzpicture}
	\node[entry] (00) at (0,0) {0};
    \node[entry,below=0pt of 00] (01) {1};
    \node[entry,below=0pt of 01] (02) {1};
    \node[entry,below=0pt of 02] (03) {1};
    \node[entry,below=0pt of 03] (04) {};
    \node[entry,below=0pt of 04] (05) {};
    \node[entry,below=0pt of 05] (06) {};
    \node[entry,below=0pt of 06] (07) {};
    
	\node[entry,right=0pt of 00] (10) {1};
    \node[entry,below=0pt of 10] (11) {0};
    \node[entry,below=0pt of 11] (12) {1};
    \node[entry,below=0pt of 12] (13) {1};
    \node[entry,below=0pt of 13] (14) {};
    \node[entry,below=0pt of 14] (15) {};
    \node[entry,below=0pt of 15] (16) {};
    \node[entry,below=0pt of 16] (17) {};
    
    \node[entry] (R1) at (0.95,-3.4) {{\LARGE $R$}};
    
  	\node[entry,right=0pt of 10] (20) {1};
    \node[entry,below=0pt of 20] (21) {1};
    \node[entry,below=0pt of 21] (22) {0};
    \node[entry,below=0pt of 22] (23) {1};
    \node[entry,below=0pt of 23] (24) {};
    \node[entry,below=0pt of 24] (25) {};
    \node[entry,below=0pt of 25] (26) {};
    \node[entry,below=0pt of 26] (27) {};
    
   	\node[entry,right=0pt of 20] (30) {1};
    \node[entry,below=0pt of 30] (31) {1};
    \node[entry,below=0pt of 31] (32) {1};
    \node[entry,below=0pt of 32] (33) {0};
    \node[entry,below=0pt of 33] (34) {};
    \node[entry,below=0pt of 34] (35) {};
    \node[entry,below=0pt of 35] (36) {};
    \node[entry,below=0pt of 36] (37) {};
    
   	\node[entry,right=0pt of 30] (40) {};
    \node[entry,below=0pt of 40] (41) {};
    \node[entry,below=0pt of 41] (42) {};
    \node[entry,below=0pt of 42] (43) {};
    \node[entry,below=0pt of 43] (44) {0};
    \node[entry,below=0pt of 44] (45) {1};
    \node[entry,below=0pt of 45] (46) {1};
    \node[entry,below=0pt of 46] (47) {1};
    
    \node[entry] (R2) at (3.4,-0.85) {{\LARGE $R^T$}};
    
   	\node[entry,right=0pt of 40] (50) {};
    \node[entry,below=0pt of 50] (51) {};
    \node[entry,below=0pt of 51] (52) {};
    \node[entry,below=0pt of 52] (53) {};
    \node[entry,below=0pt of 53] (54) {1};
    \node[entry,below=0pt of 54] (55) {0};
    \node[entry,below=0pt of 55] (56) {1};
    \node[entry,below=0pt of 56] (57) {1};
        
   	\node[entry,right=0pt of 50] (60) {};
    \node[entry,below=0pt of 60] (61) {};
    \node[entry,below=0pt of 61] (62) {};
    \node[entry,below=0pt of 62] (63) {};
    \node[entry,below=0pt of 63] (64) {1};
    \node[entry,below=0pt of 64] (65) {1};
    \node[entry,below=0pt of 65] (66) {0};
    \node[entry,below=0pt of 66] (67) {1};
        
   	\node[entry,right=0pt of 60] (70) {};
    \node[entry,below=0pt of 70] (71) {};
    \node[entry,below=0pt of 71] (72) {};
    \node[entry,below=0pt of 72] (73) {};
    \node[entry,below=0pt of 73] (74) {1};
    \node[entry,below=0pt of 74] (75) {1};
    \node[entry,below=0pt of 75] (76) {1};
    \node[entry,below=0pt of 76] (77) {0};

    
    \rectd{04}{07}{37}{34}{2}
    
    \rectd{40}{43}{73}{70}{2}    

    
    \draw (00.north west) -- (07.south west);
    \draw (00.north west) -- (70.north east);
    \draw (07.south west) -- (77.south east);
    \draw (70.north east) -- (77.south east);
\end{tikzpicture}
\end{center}
\caption{Random interval graph where only $R$ is generated randomly}
\label{fig:randig}
\end{figure}

\begin{theorem}
The worst-case $\pi$-OBDD size of an interval graph is $\Omega(N \log N)$ where the nodes are labeled according to the interval starting points and $\pi$ is an interleaved variable order with decreasing significance.
\end{theorem}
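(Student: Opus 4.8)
The plan is to carry out the probabilistic argument outlined above. Fix $N=2^{n}$ and take the random interval graph $G$ whose $\pi$-ordered adjacency matrix has the shape of Figure~\ref{fig:randig}: the first $N/2$ nodes (``left intervals'') form a clique, realized by intervals all containing the point $0$; the last $N/2$ nodes (``right intervals'') form a clique, realized by intervals all containing the point $1$ with pairwise distinct left endpoints slightly below $1$ in increasing order of index; and for each left interval $i$ I draw independently a value $\ell_{i}\in\{0,\dots,N/2\}$ uniformly at random and place the right endpoint of interval $i$ so that $i$ meets precisely the $\ell_{i}$ right intervals of smallest left endpoint. A short check shows that every realization is a legal interval graph, that the index order respects the interval starting points, and that in the lower-left quadrant $P$ of the $\pi$-ordered adjacency matrix (rows indexed by right intervals, columns by left intervals) column $i$ reads $1^{\ell_{i}}0^{N/2-\ell_{i}}$ from top to bottom, so that $(\ell_{i})_{i}$ is the only randomness.

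I would then reduce the theorem to a statement about counting blocks. By Theorem~\ref{thm:minimal_obdd} the $\pi$-OBDD size of $\chi_{E}$ equals $\sum_{i=0}^{2n-1}s_{i}$, and for an even index $2k$ the number $s_{2k}$ equals the number of distinct subfunctions obtained by fixing $x_{n-1},\dots,x_{n-k},y_{n-1},\dots,y_{n-k}$ that depend essentially on $x_{n-k-1}$; each such subfunction has as its function table a block $B^{k}_{\vert\alpha\vert,\vert\beta\vert}$. I would keep only the $2^{2k-2}$ blocks that lie inside $P$ and, among those, only the blocks \emph{with one changing position}: exactly one of the $2^{n-k}$ columns of the block makes a single $1\to0$ transition strictly inside the block (equivalently, exactly one of the values $\ell_{i}$ for $i$ in the column range of the block lies strictly between the block's first and last row index). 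Such a block is determined by its set of all-$1$ columns, by the index of its unique mixed column, and by the height where that column switches, and its subfunction always depends essentially on $x_{n-k-1}$ since halving the rows still separates the mixed column. Hence $s_{2k}\ge D_{k}$, where $D_{k}$ is the number of \emph{distinct} such blocks, and it suffices to prove $\mathbb{E}[D_{k}]=\Omega(N)$ for $\Omega(n)$ values of $k$.

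For the estimate, note that for a fixed level-$k$ block of $P$ each of its $2^{n-k}$ columns is mixed independently with probability $p=\bigl(2^{n-k}-1\bigr)/(N/2+1)$, so the expected number of one-changing-position blocks in $P$ is $2^{2k-2}\cdot 2^{n-k}p(1-p)^{2^{n-k}-1}=\bigl(1+o(1)\bigr)\,\tfrac{N}{2}\,e^{-2^{\,n+1-2k}}$, which is $\Theta(N)$ as soon as $k\ge(n+1)/2$. For distinctness I would observe first that two one-changing-position blocks in the same column-group of $P$ can never have equal content, because their shared mixed column takes different values in the two blocks; and second that two such blocks in different column-groups have equal content only if the independent $\ell$-values at the common mixed-column offset agree and the all-$1$/all-$0$ columns match, an event whose probability, summed over all ordered pairs of block positions, has expectation $O(2^{\,4k-2-2n})$. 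For $k$ in the window $(n+1)/2\le k\le 0.7\,n$, which contains $\Omega(n)$ integers, this is $o(N)$, so $\mathbb{E}[D_{k}]\ge\mathbb{E}[\#\text{one-changing-position blocks in }P]-O(2^{\,4k-2-2n})=\Omega(N)$. By linearity of expectation $\mathbb{E}[\text{$\pi$-OBDD size of }\chi_{E}]\ge\sum_{k}\mathbb{E}[D_{k}]=\Omega(n)\cdot\Omega(N)=\Omega(N\log N)$, and therefore some interval graph in the support of this distribution has $\pi$-OBDD size $\Omega(N\log N)$.

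The main obstacle is the calibration of the range of $k$ together with the collision bound. Too small a $k$ leaves too few blocks in $P$ (only $2^{2k-2}$) to possibly produce $\Omega(N)$ distinct ones, while too large a $k$ both makes ``one changing position'' rare and, more seriously, makes genuinely different blocks coincide across column-groups too often, because a small block consists mostly of all-$0$ and all-$1$ columns which match by accident with non-negligible probability. Keeping the expected number of coincidences $o(N)$ requires extracting from the match probability the factor that is exponentially small in the side length $2^{n-k}$ of a block and then summing carefully over the $\Theta(2^{4k})$ pairs of block positions; only with this care does the interval of $k$ on which $\mathbb{E}[D_{k}]=\Omega(N)$ stay of length $\Theta(n)$.
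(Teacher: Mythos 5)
Your proposal is correct and, in its skeleton, is the same probabilistic argument as the paper: the same random model (independent, uniformly distributed lengths of the $1$-runs in the lower-left quadrant --- your two-clique interval realization is exactly the matrix of Fig.~\ref{fig:randig}), the same reduction via Theorem~\ref{thm:minimal_obdd} to counting blocks with exactly one changing position, the same observation that such a block has unequal halves and hence its subfunction depends essentially on $x_{n-k-1}$, the same distinctness-within-a-block-column argument, and the same use of linearity of expectation over $\Omega(n)$ levels. The one genuinely different ingredient is how distinctness \emph{across} block columns is obtained. The paper derandomizes this step: at the finest level $k=(3/4)n$ it fixes the first $k$ columns of every block to spell out the binary index of its block column, which rules out cross-column collisions at all coarser levels simultaneously, and then re-runs the expectation with the remaining $\Omega(2^{n-k})$ free columns. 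You instead keep the clean i.i.d.\ model and subtract the expected number of cross-column collisions. Your claimed bound $O(2^{4k-2-2n})$ is in fact achievable, but be aware that it is not because the match probability is exponentially small in $L=2^{n-k}$ uniformly over pairs: for two blocks both lying near the top (or both near the bottom) of the quadrant, the probability that their $L-1$ constant columns agree is $\Theta(1)$. What saves the bound is the factor $\bigl[p_1p_1'+p_0p_0'\bigr]^{L-1}\le e^{-(L-1)\left(u(1-u')+u'(1-u)\right)}$ (with $u,u'$ the relative row offsets of the two blocks), so the sum over pairs is dominated by the $O\bigl((N/L^2)^2\bigr)$ pairs near the two extremes, giving $O(N^2/L^4)=O(2^{4k-2n})$, which is $o(N)$ exactly because you stop at $k\le 0.7\,n<\tfrac34 n$. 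With that computation spelled out, $\mathbb{E}[D_k]=\Omega(N)-o(N)$ on a window of $\Omega(n)$ levels and the theorem follows as you state. In comparison, the paper's column-fixing buys unconditional distinctness at every level at the cost of a slightly less natural probability space and a recount of free columns, while your route keeps the model clean but requires the careful collision estimate and a marginally smaller window of $k$.
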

\begin{proof}
We describe a random process to generate an interval graph where the adjacency matrix is constant except the $N/2 \times N/2$ lower left submatrix which we denote by $R$ (see Fig.\,\ref{fig:randig}). For this, we choose the length of the $1$-sequence of column $j$ for all $0 \leq j \leq N/2-1$ uniformly at random from $\lbrace N/2-j,\ldots, N-1-j \rbrace$ and for all $N/2 \leq j \leq N-1$ the length of column $j$ is equal to $N-1-j$. As a result, the length of the $1$-sequence of each column within $R$ is uniform at random in $\lbrace 1, \ldots, N/2 \rbrace$. 

Let $G = (V,E)$ be a random interval graph generated by the above process and $f := \chi_E$. Let $1 \leq k \leq n$ and $s_k$ be the number of different subfunctions $f_{\mid \alpha, \beta}$ of $f$ where $\alpha \in \bset^k$ is an assignment to the variables $x_{n-1},\ldots, x_{n-k}$ and $\beta \in \bset^k$ is an assignment to the variables $y_{n-1},\ldots, y_{n-k}$, respectively, and $f_{\mid \alpha, \beta}$ is essentially dependent on $x_{n-k-1}$. We show that the expected value of $s_k$ with $n/2+1 \leq k \leq (3/4)n$ is $\Omega(2^n) = \Omega(N)$. Therefore, there has to be an interval graph with $\pi$-OBDD size $\Omega(N \log N)$.

We known that $k$ induces a grid in $R$ consisting of $2^{n-k} \times 2^{n-k}$ blocks. At first, we calculate the expected number of blocks with exactly one changing position. The probability that a fixed block of size $L \times L$ with $L \leq 2^{n/2-1}$ has exactly one changing position is  
$$ \begin{array}{rcl}
\sum\limits_{i=1}^{L} \dfrac{L-1}{2^{n-1}} \cdot \left(1-\dfrac{L-1}{2^{n-1}}\right)^{L-1} & \geq & \dfrac{L \cdot (L-1)}{2^{n-1}} \cdot \left(1-\dfrac{L}{2^{n-1}}\right)^{L-1}\\
& \geq & \dfrac{L \cdot (L-1)}{2^{n-1}} \cdot \left(1-\dfrac{2^{n/2}}{2^{n-1}}\right)^{2^{n/2-1}-1}\\
& \geq & \dfrac{L \cdot (L-1)}{2^{n-1}} \cdot e^{-1}.
\end{array} $$
Let $n/2+1 \leq k \leq (3/4)n$ be fixed. Since we have $2^{k-1}\cdot 2^{k-1}$ blocks of size $2^{n-k} \times 2^{n-k}$ in $R$, the expected value of the number of blocks with exactly one changing position is at least
$$ \dfrac{1}{2e} \cdot 2^k \cdot 2^k \cdot \dfrac{2^{n-k}\cdot (2^{n-k}-1)}{2^{n}} = \dfrac{1}{2e} \cdot 2^k \cdot (2^{n-k}-1) = \Omega(2^n). $$
Now, we have to ensure that these blocks correspond to different subfunctions which are also essentially dependent on $x_{n-k-1}$. The subfunctions, where, additionally, $x_{n-k-1}$ is replaced by $0$ and $1$, correspond to a half of the blocks. Thus, a block is symmetric iff the corresponding subfunction is not essentially dependent on $x_{n-k-1}$. Due to the one changing position in each block, this is not possible. Blocks $B_{i,j}^k$ and $B_{i',j}^k$ with exactly one changing position and $i \neq i'$ clearly correspond to different subfunctions because they are in the same block column. But blocks $B_{i,j}^k$ and $B_{i',j'}^k$ with $j \neq j'$, \ie from different block columns, do not have to be different. By replacing some columns of the matrix by constants, we ensure that this also holds.

Consider the case $k = (3/4)n$, \ie the finest grid of $R$ made by $2^{n-k} \times 2^{n-k}$ blocks with $n/2+1 \leq k \leq (3/4)n$. For every block column $0 \leq j \leq 2^k-1$ we fix the first $k$ columns of $B_{i,j}^k$ with $0 \leq i \leq 2^k-1$ such that they represent the binary number $[j]_2$ of the column index. Thus, we have that blocks $B_{i,j}^k$ and $B_{i',j'}^k$ with $j \neq j'$ are always different. Since we looked at the finest grid, this also holds for smaller values of $k$ because every larger block is equal to a union of small blocks. The probability that a block contains exactly one changing position is smaller than before, since we fix some columns. For $k = (3/4)n$ the number of fixed columns is $(3/4)n$ and in each $k \rightarrow k-1$ step this number is doubled, \ie for $n/2+1 \leq k \leq (3/4)n$ the number of \enquote{free} columns is
$$ 2^{n-k}- 2^{(3/4)n-k} \cdot (3/4)n = 2^{n-k} -2^{(3/4n-k+\log((3/4)n))} = \Omega(2^{n-k}) $$
for $n$ large enough. Replacing $L = 2^{n-k}$ by $\Omega(2^{n-k})$ in the calculation of the expectation does not change the asymptotic behavior. Thus, the expected number of blocks with exactly one changing position remains $\Omega(2^n)$ for every $n/2+1 \leq k \leq (3/4)n$.
\end{proof}

\section{Implicit Algorithms on Interval Graphs}
\label{sec:algig}
In this section, we want to develop a maximum matching algorithm on unit interval graphs and a coloring algorithm on unit and general interval graphs. Before we start with the algorithms, we have to investigate a special function class, which we will use in our algorithms, so called multivariate threshold functions. This function class was investigated in \cite{Woe2006} to analyze the running time of an implicit topological sorting algorithm on grid graphs and Woelfel \cite{Woe2006} looked into the OBDD size of these functions for the interleaved variable order with increasing significance, \ie just the reverse of our variable order. Hosaka et al. \cite{HosakaTKY97} showed that the difference of the OBDD sizes for this two orders is at most $n-1$. We can show that an OBDD using our variable order is not only small but can also be constructed efficiently which is important in view of the implementation.

\subsection{Constructing OBDDs for Multivariate Threshold Functions}
\label{sec:algig_thresh}

We start with a definition of multivariate threshold functions \cite{Woe2006}.

\begin{definition}
A Boolean function $f: \bset^{kn} \rightarrow \bset$ with $k$ input variable vectors $x^{(1)},\ldots,x^{(k)} \in \bset^n$ of length $n$ is called $k$-variate threshold function, if there exist a threshold $T \in \mathbb{Z}$ and $W \in \mathbb{N}$ and weights $w_1,\ldots,w_k \in \lbrace -W, \ldots, W \rbrace$ such that
$$ f(x^{(1)}, \ldots, x^{(k)}) = 1 \Leftrightarrow \sum\limits_{j=1}^k w_j \cdot \vert x^{(j)} \vert \geq T. $$
The set of $k$-variate threshold functions $f \in B_{kn}$ with weight parameter $W$ is denoted by $\mathbb{T}^W_{k,n}$.
\end{definition}
\noindent
Woelfel \cite{Woe2006} showed that there exists an OBDD representing a multivariate threshold function $f \in \mathbb{T}^W_{k,n}$ of size $O(k^2 W n)$ and such an OBDD can be constructed efficiently. Our proof for our variable order is similar to the proof in \cite{Woe2006}: It is sufficient to look at the carry values of the sum $\sum_{j=1}^k w_j \cdot \vert x^{(j)} \vert - T$ and, especially, at the carry value generated at the position with the most significance. Reading the bits with increasing significance, Woelfel showed that after each bit it is enough to store a number with absolute value $O(kW)$ to compute the carry values. Here, we show that the influence of input bits with lower significance is small such that we can also bound the number which we have to store after each bit while we read the bits with decreasing significance.

\begin{theorem}
\label{thm:obddthresholdfunctions}
Let $f \in \mathbb{T}^W_{k,n}$ be a $k$-variate threshold function with weight bound $W \in \mathbb{N}$ and $\pi_{k,n}$ be the $k$-interleaved variable order where the variables are tested with decreasing significance. Then we can construct a $\pi_{k,n}$-OBDD representing $f$ with width $O(kW)$ and size $O(k^2Wn)$ in time $O(k^2 W n)$.
\end{theorem}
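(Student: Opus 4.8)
The plan is to mimic Woelfel's carry-value argument but reading bits from the most significant position down. Define the partial sum $S := \sum_{j=1}^k w_j |x^{(j)}| - T$, and observe that $f=1$ iff $S \geq 0$. Process the $n$ levels of the $k$-interleaved order with decreasing significance: at level $i$ (having read, for each $j$, the bits $x^{(j)}_{n-1},\dots,x^{(j)}_{n-i}$) I want to argue that the subfunction reached depends only on a small integer state. Writing $A_i := \sum_{j=1}^k w_j \cdot \big(\sum_{\ell=n-i}^{n-1} x^{(j)}_\ell 2^\ell\big)$ for the already-read ``high'' contribution and $R_i := \sum_{j=1}^k w_j \cdot \big(\sum_{\ell=0}^{n-i-1} x^{(j)}_\ell 2^\ell\big)$ for the not-yet-read ``low'' contribution, we have $S = A_i + R_i - T$. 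The key quantitative fact is that $|R_i| \le \sum_j |w_j|(2^{n-i}-1) < kW\,2^{n-i}$, so the still-undetermined part is confined to an interval of length less than $2kW\,2^{n-i}$. Hence the value of $f$ is already determined unless $A_i - T$ lies within this window, i.e.\ $-kW\,2^{n-i} < A_i - T < kW\,2^{n-i}$, and in that undetermined range $A_i - T$ is a multiple of $2^{n-i}$, so there are at most $O(kW)$ relevant states at each level. This gives width $O(kW)$.

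**Turning states into an OBDD.** I would make this precise by defining, for a node at level $i$ reached by a partial assignment with high-part $A_i$, the state $c_i := \lfloor (A_i - T)/2^{n-i} \rfloor$ (or a clamped version: collapse all states with $c_i \ge kW$ into a single accepting sink and all with $c_i \le -kW-1$ into a single rejecting sink). Two partial assignments with the same clamped state reach the same subfunction, because the future behaviour depends on $A_i - T$ only through the comparison $R_i \ge T - A_i$ and, since $0 \le R_i < $ (range) as the low bits vary, only the ``offset within the current $2^{n-i}$-block'' matters once we know which block $T-A_i$ falls in — and the unclamped states precisely track that block. The transition from level $i$ to level $i+1$ reads the $k$ bits $x^{(1)}_{n-i-1},\dots,x^{(k)}_{n-i-1}$, adds $\big(\sum_j w_j b_j\big)2^{n-i-1}$ to $A_i - T$, and updates the state by $c_{i+1} = 2c_i + \text{(correction from the new bits)}$; one checks the correction keeps $c_{i+1}$ in an $O(kW)$-size range (re-clamping if it escapes). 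Expanding each super-level that reads $k$ bits into $k$ ordinary OBDD levels costs at most a factor $O(k)$ in the number of nodes per super-level (a binary branching tree of depth $k$ over $O(kW)$ entry states has $O(kW\cdot 2^k)$ leaves — wait, that is too large; instead process the $k$ bits sequentially, maintaining a running integer of absolute value $O(kW)$ after each single bit, which keeps every intermediate level at width $O(kW)$ as well). Summing $O(kW)$ nodes over $O(kn)$ levels yields size $O(k^2 W n)$.

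**Construction and running time.** For the algorithmic claim I would build the OBDD level by level, starting from the single root state $c_0 = \lfloor -T/2^n \rfloor$ (clamped), and at each of the $O(kn)$ levels: enumerate the $O(kW)$ reachable states, and for each create a node with its two outgoing edges pointing to the (clamped) successor states at the next level, deduplicating via the state value used as a key. Because there are $O(kW)$ states per level and $O(kn)$ levels, and each node is produced in $O(1)$ arithmetic operations on integers of magnitude $O(kW)$ plus a lookup, the total construction time is $O(k^2 W n)$ (treating arithmetic on such numbers as $O(1)$, consistent with the paper's conventions). The resulting diagram is already reduced in the sense needed — nodes carrying distinct states are genuinely distinct subfunctions — or a final linear-time reduction pass can be applied without affecting the bounds.

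**Main obstacle.** The delicate point is the state-collapsing argument: showing rigorously that the clamped integer $c_i$ is a \emph{complete} invariant, i.e.\ that two partial assignments with equal $c_i$ truly induce identical subfunctions, and that once $c_i$ leaves the window $[-kW-1,\,kW]$ the outcome is genuinely forced for \emph{every} completion of the low bits. This requires carefully bounding $R_i$ from both sides using $\sum_j |w_j| \le kW$ and handling the mixed signs of the weights, and verifying that the $k \rightarrow k-1$ style block refinement (the $c_{i+1} = 2c_i + \text{correction}$ step) never silently loses a distinction. The interleaving-expansion bookkeeping — keeping width $O(kW)$ while reading the $k$ bits of one significance level one at a time rather than as a block — is the second place where care is needed, since a naive ``read all $k$ at once'' bound would give an unwanted $2^k$ factor.
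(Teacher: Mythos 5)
Your proposal is correct and follows essentially the same route as the paper: read the bits with decreasing significance, maintain a bounded partial-sum state that is clamped to accept/reject once the remaining lower-significance contribution (bounded by $kW\,2^{n-i}$) can no longer change the sign, process the $k$ bits of each significance level one at a time to keep every variable level at width $O(kW)$, and build the OBDD level by level from this state machine in time linear in its size. The only difference is bookkeeping: the paper tracks the scaled carry values $c_i'$ accumulated around $-T_n$, while you track $\bigl\lfloor (A_i - T)/2^{n-i} \bigr\rfloor$ directly, which is the same invariant up to normalization (and your floor-state definition quietly repairs the minor slip that $A_i - T$ itself need not be a multiple of $2^{n-i}$).
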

\begin{proof}
Similar to the proof in \cite{Woe2006}, we choose $T_0,\ldots,T_{n-1} \in \bset$ and $T_n \in \mathbb{Z}$ such that $-T = \sum_{i=0}^{n} T_i \cdot 2^i$. Notice that the $T_i$ are unique, and that $T_0,\ldots,T_{n-1}$ are the $n$ least significant coefficients of $\vert T \vert$ in binary representation and $T_n$ is the number of times that we need to add $2^n$ in order to make up for the missing coefficients in this binary representation. The function value of $f$ is determined by the sign of $S := -T + \sum_{j=1}^k w_j \cdot \vert x^{(j)} \vert = \sum_{i=0}^{n-1} (T_i + \sum_{j=1}^k w_j \cdot x^{(j)}_i) \cdot 2^i + T_n \cdot 2^n$. 
Now, we represent $S$ in the same way as $T$, \ie we define $S_0,\ldots,S_{n-1}\in \{0,1\}$ and $S_n \in \mathbb{Z}$ as the unique coefficients satisfying $S = \sum_{i=0}^{n} S_i \cdot 2^i$. 
We want to compute $S_i$ step-by-step: Notice that $S_i$ results from adding $T_i+\sum_{j=1}^k w_j \cdot x^{(j)}_i$ and the carry value which is generated at position $i-1$, and taking the remainder of the division of this sum by two. In particular, $S_i$ is only influenced by factors of $2^j$ for $j\le i$, and it holds that for $0\le i\le n-1$
$$ \begin{array}{ll} 
S_i := \left( c_{i-1}+T_i+\sum\limits_{j=1}^k w_j x^{(j)}_i \right) \mod2 2 & \text{ and }\\
c_i := \left\lfloor \left( c_{i-1} + T_i + \sum\limits_{j=1}^k w_j x^{(j)}_i \right) / 2 \right\rfloor &
\end{array} $$
with $c_{-1} = 0$. Finally, we compute $S_n = c_{n-1}+T_n$. Now, we have $f(x^{(1)}, \ldots, x^{(k)}) = 1 \Leftrightarrow c_{n-1} \geq -T_n$, \ie it is sufficient to compute the $c_i$ values.

We rewrite the $c_i$ to have them in a more convenient form. Notice that for $m,n \in \mathbb{N}$ and $x \in \mathbb{R}$ it holds that $\left\lfloor \dfrac{\lfloor x \rfloor + m}{n} \right\rfloor = \left\lfloor \dfrac{x + m}{n} \right\rfloor$ (see, \eg \cite{Knuth}). So we have

\begin{eqnarray*}
c_1 & = & \left\lfloor \left( c_{0} + T_1 + \sum\limits_{j=1}^k w_j x^{(j)}_1 \right) / 2 \right\rfloor\\
    & = & \left\lfloor \left( \left\lfloor \left( T_0 + \sum\limits_{j=1}^k w_j x^{(j)}_0 \right) / 2 \right\rfloor + T_1 + \sum\limits_{j=1}^k w_j x^{(j)}_1 \right) / 2 \right\rfloor\\
    & = & \left\lfloor \left( \left( T_0 + \sum\limits_{j=1}^k w_j x^{(j)}_0 \right) / 2 + T_1 + \sum\limits_{j=1}^k w_j x^{(j)}_1 \right) / 2 \right\rfloor\\
    & = & \left\lfloor \left( T_0 + \sum\limits_{j=1}^k w_j x^{(j)}_0 \right) / 4 + \left( T_1 + \sum\limits_{j=1}^k w_j x^{(j)}_1 \right) / 2 \right\rfloor.
\end{eqnarray*}
Let $c'_i = \frac{T_i + \sum\limits_{j=1}^k w_j x^{(j)}_i}{2^{n-i}}$. Applying the above observation iteratively, we have
$$ c_{n-1} = \left\lfloor \sum\limits_{i=0}^{n-1} c'_i \right\rfloor. $$

According to our variable order, we have to compute $c_i'$ backwards from $n-1$ to $0$. This is possible because each $c_i'$ only depends on $i$. We describe an algorithm that is divided into phases. In each phase, the algorithm is in a state $Q$, reads all $k$ bits of the input variable vectors of the same significance and changes the state depending on the former state and the read bits. After phase $i$, the algorithm has the correct sum of the summands from $n-1$ to $i$. Notice that the bits with lesser significance can only add a value to $S$ with bounded absolute value, so if the accumulated sum has a large enough absolute value, then we can already decide which sign $S$ has.

Let us start with phase $n-1$ and state $Q = 0$. In phase $1 \leq i \leq n-1$ we compute the value of $c'_i$ by reading $x^{(1)}_i,\ldots,x^{(k)}_i$:
\begin{enumerate}
\item If $c'_i+Q \geq -T_n + (kW+1)/2^{n-i}$ then change into the accepting state $Q_{acc}$.
\item If $c'_i+Q < -T_n - (kW+1)/2^{n-i}$ then change into the rejecting state $Q_{rej}$.
\item Otherwise update the state $Q = c'_i+Q$ and go to phase $i-1$.
\end{enumerate}
In phase $0$ we compute $c'_0$ and accept iff $\lfloor c'_0 + Q \rfloor \geq -T_n$.

If we reach phase $0$ then the output is correct due to our above observations. So we have to show that we correctly accept/reject within phase $i$ with $1 \leq i \leq n-1$. For $i = 0, \ldots, n-1$ it is $\vert c'_i \vert \leq \frac{kW+1}{2^{n-i}}$ because $T_i \in \bset$ and all weights are bounded by $W$ and therefore
$$ \left\vert \sum\limits_{l=0}^{i} c'_l \right\vert \leq \sum\limits_{l=0}^{i} \dfrac{kW+1}{2^{n-l}} = \dfrac{kW+1}{2^{n-i}}\sum\limits_{l=0}^{i} \dfrac{1}{2^{l}} \leq \dfrac{kW+1}{2^{n-i}} \cdot 2 = \dfrac{kW+1}{2^{n-i-1}}.$$
\Ie if in phase $i$ it is either $c'_i + Q \geq -T_n + (kW+1)/2^{n-i}$ or $c'_i + Q < -T_n - (kW+1)/2^{n-i}$, then we know that $\left\lfloor \sum\limits_{i=0}^{n-1} c'_i \right\rfloor \geq -T_n$ or $\left\lfloor \sum\limits_{i=0}^{n-1} c'_i \right\rfloor < -T_n$ respectively. So our algorithm works correctly.

Based on this algorithm the construction of the $\pi_{k,n}$-OBDD is easy: Assume that we update the state immediately after reading an input variable. Then each state is represented by an OBDD node labeled by the variable which the algorithm will read next. The states for accepting and rejecting are represented by the sinks. The edges correspond to the state transition of the algorithm. If we are not in an accepting or rejecting state, we know that the state value is between $-T_n-\frac{kW+1}{2^{n-i-1}}$ and $-T_n+\frac{kW+1}{2^{n-i-1}}-1$. We also know that $\vert c'_i \vert \leq \frac{kW+1}{2^{n-i}}$, \ie the values computed in phase $i$ has to be between
$$ -T_n-\frac{kW+1}{2^{n-i-1}}-\frac{kW+1}{2^{n-i}} \text{ and } -T_n+\frac{kW+1}{2^{n-i-1}}-1+\frac{kW+1}{2^{n-i}}.$$
So all values are in the interval $I = [-T_n-\frac{3}{2}\frac{kW+1}{2^{n-i-1}}, -T_n+\frac{3}{2}\frac{kW+1}{2^{n-i-1}})$. The denominator of $c'_i$ is an integer, \ie only at most $2^{n-i} \cdot \vert I \vert = O(kW)$ values of $I$ are possible during the computation. Therefore, we have an OBDD width of $O(kW)$ and overall an OBDD size of $O(k^2Wn)$. The construction algorithm is straightforward and has a running time which is linear to the OBDD size.
\end{proof}
\noindent
The proof of Theorem \ref{thm:obddthresholdfunctions} also showed that the \emph{complete}-OBDD width is bounded by $O(kW)$ where an OBDD is called complete if the length of every path from the root to a sink is equal to the number of input bits, \ie all variables are tested on the path. A binary synthesis of two functions with complete-OBDD width $w_1$ and $w_2$ has a complete-OBDD width of at most $w_1 \cdot w_2$ \cite{SawitzkiLATIN06}. Since the complete-OBDD size is an upper bound on the general OBDD size, we can compute a sequence of $O(1)$ binary synthesis of multivariate threshold functions efficiently using the interleaved variable order with decreasing significance if $k$ and $W$ are constants.

We use the arithmetic notation in our algorithm instead of the functional notation whenever we use multivariate threshold functions or simple combination of multivariate threshold functions, \eg we denote by $\vert x \vert - \vert y \vert = 1$ the conjunction of the multivariate threshold functions $f(x,y) = 1 \Leftrightarrow \vert x \vert - \vert y \vert \geq 1$ and $g(x,y) = 1 \Leftrightarrow \vert y \vert - \vert x \vert \geq -1$. 

\subsection{Maximum Matching on Unit Interval Graphs}
\label{sec:algig_matching}
Let $G = (V,E)$ be a unit interval graph and the nodes are labeled according to the sorted sequence of starting points. Our maximum matching algorithm is based on a simple observation that was also used in a parallel algorithm for this problem \cite{ChungPC97}: Assume that the unit interval graph is connected (otherwise this observation holds for every connected component). Then we have $\lbrace v_i, v_{i+1} \rbrace \in E$ for $i = 0, \ldots, N-2$. Assume that there is an $i \in \lbrace 0, \ldots, N-2 \rbrace$ such that $\lbrace v_i, v_{i+1} \rbrace \not\in E$, then due to the connectivity there has to be another interval with starting point left of $v_i$ or right of $v_{i+1}$, which intersects both intervals $v_i$ and $v_{i+1}$. The length of this interval would be larger than $1$ which is a contradiction.

Algorithm \ref{algo:unitmatching} uses besides the characteristic function $\chi_E$ also the characteristic function of the set of nodes. This is important if the number of nodes is not a power of two and we have assignments to the input variables which do not represent a node. Since we label our node according to their interval starting point, we have that the characteristic function of the node set is equal to $f(x) = 1 \Leftrightarrow \vert x \vert < N$.

\begin{algorithm}[H]
\caption{Implicit maximum matching algorithm for unit interval graphs}
\label{algo:unitmatching}
\begin{algorithmic}[1]
\REQUIRE Unit interval graph $\chi_E$
\ENSURE Matching $\chi_M$
\\ \COMMENT {Compute path graph}
\STATE $\chi_{\overrightarrow{E}}(x,y) = \chi_E(x,y) \wedge (\vert y \vert - \vert x \vert = 1)$\\
\COMMENT {Compute set of starting nodes}
\STATE $First(z) = (\vert z \vert < N) \wedge \forall x: \overline{\chi_{\overrightarrow{E}}(x,z)}$
\\ \COMMENT {Compute set of reachable nodes}
\STATE $S(z) = \exists z': \chi_{\overrightarrow{E}}(z,z')$
\STATE $Reachable(x,y) = (\vert x \vert \leq \vert y \vert) \wedge \forall z: (\vert x \vert \leq \vert z \vert < \vert y \vert) \Rightarrow S(z)$
\STATE $Reachable(x,y) = Reachable(x,y) \wedge (\vert x \vert < N) \wedge (\vert y \vert < N)$
\\ \COMMENT {Compute matching}
\STATE {$F(x) = \exists z,d: First(z) \wedge Reachable(z,x) \wedge (\vert x \vert-\vert z \vert = 2 \vert d \vert)$}
\STATE {$M(x,y) = \chi_{\overrightarrow{E}}(x,y) \wedge F(x)$}
\STATE {$\chi_M(x,y) = M(x,y) \vee M(y,x)$}
\RETURN $\chi_M$
\end{algorithmic}
\end{algorithm}
\noindent
At first, the algorithm computes a directed path graph, \ie a union of paths, which is a subgraph of the input graph and consists of the edges $(x,y)$ with $\vert x \vert - \vert y \vert = 1$. As we have seen, for every connected component this path consists of all nodes within the component. Maximum matchings on vertex disjoint paths can be computed with $O(\log^2 N)$ functional operations \cite{BolligP12}. Here, we know that every path $P$ consists of a consecutive sequence of nodes, \ie $P = (v_i, \ldots, v_{k})$ for $0 \leq i \leq k \leq N-1$. We can use this information to lower the number of functional operations: We compute the set of nodes which are starting nodes of the paths. Then we want to compute the connected components of the graph. Usually, this is done by computing the transitive closure, which needs $O(\log^2 N)$ operations. Here, we can do it better: Two nodes $x$ and $y$ of the unit interval graph are connected iff every node $z$ with $\vert x \vert \leq \vert z \vert < \vert y \vert$ has a successor, \ie there is an edge $(v_{\vert z \vert}, v_{\vert z \vert +1}) \in \overrightarrow{E}$. Having this information, we can compute the matching by adding every second edge of a path to the matching beginning with the first edge. To compute this set of edges on general paths, the distance of every node to the first node has to be computed which can be done by an iterative squaring approach with $O(\log^2 N)$ functional operations \cite{BolligP12}. Here, we can easily determine the set of edges by comparing the difference of two node labels due to the structure of the paths.

\begin{theorem}
Algorithm \ref{algo:unitmatching} computes a maximum matching for unit interval graphs using $O(\log N)$ functional operations.
\end{theorem}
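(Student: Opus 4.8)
The plan is to prove the two claims of the theorem separately: first that $\chi_M$ is an undirected maximum matching of $G$, and then that Algorithm~\ref{algo:unitmatching} uses only $O(\log N)$ functional operations. For correctness I would start by pinning down, line by line, which set the computed characteristic function describes, using the structural fact recalled just before the algorithm: with the nodes labelled by sorted interval starting points, every connected component of a unit interval graph is a consecutive block $\{v_l,\ldots,v_r\}$ of node indices, and consecutive nodes of such a block are adjacent (if $\{v_i,v_{i+1}\}\notin E$ then no edge crosses between $\{v_0,\ldots,v_i\}$ and $\{v_{i+1},\ldots,v_{N-1}\}$, so these are different components). Hence $\chi_{\overrightarrow E}$ (line~1) is the vertex-disjoint union of the directed paths $v_l\to v_{l+1}\to\cdots\to v_r$, one per component; $First$ (line~2) is the set of smallest-index nodes of components; $S$ (line~3) is the set of nodes that are not the largest-index node of their component; and $Reachable(x,y)$ (lines~4--5) holds exactly when $|x|\le|y|<N$ and $v_{|x|},v_{|y|}$ lie in the same component, because on a unit interval graph ``every node $v_{|z|}$ with $|x|\le|z|<|y|$ has a successor'' is equivalent to ``$v_{|x|}$ and $v_{|y|}$ are connected''. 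Finally $F$ (line~6) is the set of nodes $v_{|x|}$ whose distance to the start of their component along the path is even: $First(z)\wedge Reachable(z,x)$ forces $v_{|z|}$ to be that start, and then $|x|-|z|$ equals the $0$-indexed position of $v_{|x|}$ inside the block, so $|x|-|z|=2|d|$ tests this parity.

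For the matching itself I would argue that $M(x,y)=\chi_{\overrightarrow E}(x,y)\wedge F(x)$ selects, in a component with nodes $w_0,\ldots,w_{m-1}$ (in index order), precisely the edges $(w_0,w_1),(w_2,w_3),\ldots$, which is a matching of size $\lfloor m/2\rfloor$ inside that component; after the symmetrisation in line~8, $\chi_M$ is an undirected matching of $G$ of total size $\sum_{\text{components }C}\lfloor m_C/2\rfloor$ (isolated components having no $\overrightarrow E$-edge contribute $0$, correctly). Optimality follows because a matching uses two vertices per edge, so no matching of $G$ can have more than $\lfloor m_C/2\rfloor$ edges inside an $m_C$-node component, hence at most $\sum_C\lfloor m_C/2\rfloor$ edges overall, since maximum matchings decompose over connected components. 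This is exactly the observation from~\cite{ChungPC97} that the algorithm is built on.

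For the operation count I would walk through the lines. All arithmetic predicates that appear --- $|y|-|x|=1$, $|z|<N$, $|x|\le|z|<|y|$, $|x|\le|y|$, $|x|-|z|=2|d|$ --- are conjunctions of $O(1)$ multivariate threshold functions with a constant number $k$ of $n$-bit argument vectors and constant weight bound $W$, so by Theorem~\ref{thm:obddthresholdfunctions} each is built directly with $O(1)$ functional operations. Negation and one binary synthesis cost $O(1)$ operations; quantifying out $O(1)$ many $n$-bit variable vectors ($\forall x$ in line~2, $\exists z'$ in line~3, $\forall z$ in line~4, $\exists z,d$ in line~6) costs $O(n)=O(\log N)$ operations; and reusing $Reachable$ with its arguments swapped in line~6 and symmetrising $M$ in line~8 each cost one argument reordering, i.e.\ $O(n)$ operations. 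Summing over the constantly many lines yields $O(\log N)$ functional operations.

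The step I expect to be the real work is the correctness interpretation of lines~4--6: in particular, carefully verifying that the ``successor everywhere in between'' test inside $Reachable$ really captures same-component membership --- this is precisely where the \emph{unit}, as opposed to general, interval structure is used --- and keeping straight the orientation of $\overrightarrow E$ together with the off-by-one in ``$|x|-|z|$ is the position of $v_{|x|}$'', so that $F$ selects the correct parity class and hence a genuine maximum matching of each path rather than one that is off by an edge. The operation count, by contrast, is routine once Theorem~\ref{thm:obddthresholdfunctions} is available.
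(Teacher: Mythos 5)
Your proposal is correct and follows essentially the same route as the paper's (much terser) proof: exploit that each connected component of a unit interval graph labelled by sorted starting points is a consecutive block whose consecutive nodes are adjacent, so the algorithm builds the Hamiltonian path of each component, marks even-distance nodes, and takes alternating edges, while the $O(\log N)$ operation bound comes from the constantly many quantifications over $n$-bit vectors. Your line-by-line verification of $First$, $S$, $Reachable$, $F$ and the $\lfloor m_C/2\rfloor$ optimality argument simply fills in details the paper leaves implicit, so there is no substantive difference.
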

\begin{proof}
As we have seen in the beginning of this section, every connected component has always a path which consists of consecutive nodes and visits every node in this component. The algorithm computes such paths and construct a maximum matching of each path. Clearly, the union of these matchings is a maximum matching of the complete graph.

The number of functional operations is determined by the lines $2$-$4$ where we use quantifications over $O(\log N)$ variables. Otherwise, there is only a constant number of operations.
\end{proof}

\subsection{Implicit Coloring of Interval Graphs}
\label{sec:algig_coloring}
Coloring refers to the task to color the nodes of a graph using the least number of colors, such that two adjacent nodes have different colors. Colorings of interval graphs are for example used in VLSI design (where it is called \textit{channel assignment}). In the case of interval graphs, there is an easy greedy coloring algorithm: Sort the endpoints of the intervals (\ie both left and right endpoints) in ascending order. At the beginning all colors are on a stack. Then color the intervals sequentially by traversing the sorted list and using the first color available on the stack when the current element is a left endpoint. As soon as we visit an right endpoint, we push the used color onto the top of the stack. This greedy algorithm is optimal and can be implemented to run in linear time by determining the order without sorting \cite{Olariu91}. The parallel algorithm in \cite{Zomaya1996} assigns weights to the endpoints and computes prefix sums to simulate the stack. In our implicit algorithm we can do the simulation in a more direct manner: We call two intervals $I_i = \left[ a_i,b_i\right]$ and $I_j = \left[ a_j,b_j\right]$ \emph{related} iff $b_i < a_j$ and $I_j$ is the first interval with the same color as $I_i$ in the greedy algorithm. The following easy observation helps us to compute this \enquote{related} relation implicitly.
\begin{observation}
\label{obs:num_rele}
The intervals $I_i$ and $I_j$ are two related iff the number of right endpoints $r$ with $b_i < r < a_j$ is equal to the number of left endpoints $l$ with $b_i < l < a_j$ and for all intervals $I_{j'}$ with $b_i < a_{j'} < a_j$ the number of right endpoints $r$ with $b_i < r < a_{j'}$ is not equal to the number of left endpoints $l$ with $b_i < l < a_{j'}$
\end{observation}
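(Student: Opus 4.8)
The plan is to reduce the statement to a bookkeeping argument about the greedy algorithm's stack. First I would fix the standard description of the greedy coloring: scanning the sorted list of all $2N$ endpoints from left to right, a left endpoint of an interval pops the top color off the stack (assigning it), and a right endpoint of an interval pushes its color back onto the stack. The key invariant is that, at any point during the scan, the multiset of colors currently \emph{not} on the stack is exactly the set of colors assigned to intervals that are currently ``open'' (already started, not yet finished). In particular the number of colors currently in use equals (number of left endpoints seen so far) $-$ (number of right endpoints seen so far). I would phrase the whole proof in terms of this signed quantity, which I will call the \emph{level} $\ell(p)$ at a point $p$ of the scan (the count of currently open intervals just after processing $p$).

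The central structural fact I need is a LIFO/stack characterization of which interval reuses the color freed by $I_i$: the color released when we process the right endpoint $b_i$ is the color that sits on top of the stack until the next left endpoint is scanned, and that next left endpoint's interval is precisely the one that inherits $I_i$'s color — \emph{unless} some right endpoint is scanned first, in which case a different (more recently freed) color is on top and $I_i$'s color gets buried. Unwinding this, $I_j$ gets $I_i$'s color exactly when, between $b_i$ and $a_j$ in the sorted order, the color of $I_i$ is never buried and then resurfaces exactly at the moment $a_j$ is reached. Translating ``never buried, resurfaces now'' into the level function: between $b_i$ and $a_j$ the level, measured relative to its value just after $b_i$, must first return to that baseline exactly at the step $a_j$ and not earlier overshoot in a way that would let a later-freed color take precedence. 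Concretely, letting $L(b_i,a_j)$ and $R(b_i,a_j)$ denote the numbers of left and right endpoints strictly between $b_i$ and $a_j$, the condition ``$I_i$'s freed color is on top of the stack right before $a_j$ is processed, and $a_j$ is a left endpoint'' is equivalent to $R(b_i,a_j)=L(b_i,a_j)$, i.e. the scan between $b_i$ and $a_j$ is a balanced sequence (every freed color in that window has been re-consumed), so the color on top is exactly the one $b_i$ deposited. The minimality clause — for every $I_{j'}$ with $b_i<a_{j'}<a_j$ the counts are \emph{unequal} — is exactly the requirement that $a_j$ be the \emph{first} such balanced moment, so that no earlier left endpoint already consumed $I_i$'s color. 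Conversely, if $a_j$ is the first point after $b_i$ with balanced counts, then $I_i$'s color sits untouched on the stack top throughout that window and is handed to $I_j$. This establishes both directions.

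I would organize the write-up as: (1) state the stack invariant and prove it by induction on the scan; (2) prove the LIFO lemma that the interval inheriting $I_i$'s color is the first left endpoint after $b_i$ at which the running balance of freed-vs-consumed colors returns to zero; (3) observe that ``running balance returns to zero'' for the window $(b_i,a_j)$ is literally $R(b_i,a_j)=L(b_i,a_j)$, and ``first such moment'' is literally the stated inequality for all intermediate $a_{j'}$; (4) note that the hypothesis $b_i<a_j$ (so $I_i$ finishes before $I_j$ starts, matching the definition of ``related'') guarantees $b_i$ really is processed before $a_j$ so the window makes sense. One should also handle the convention from the preliminaries that all endpoints are distinct, which removes tie-breaking worries in the sort. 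The main obstacle I anticipate is step (2): making the LIFO argument fully rigorous requires carefully tracking \emph{which} color is on top of the stack, not merely how many colors are in use, because the level function alone does not see the stack order. The clean way around this is to prove a slightly stronger invariant — that the stack contents, read top to bottom, are exactly the freed colors in reverse order of when they were freed — from which the ``first balanced moment'' characterization drops out immediately; once that invariant is in hand, the remaining steps are the routine counting identities above and need no real computation.
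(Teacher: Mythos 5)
Your argument is correct. The paper gives no proof of this observation (it is asserted as an easy consequence of the greedy algorithm), and your stack-based argument supplies exactly the intended justification: the color freed at $b_i$ is popped at the first left endpoint after $b_i$ at which the running balance of right minus left endpoints in the window returns to zero, which is literally the stated counting condition together with its minimality clause. One caution for the write-up: your intermediate sentence asserting that ``$I_i$'s freed color is on top of the stack just before $a_j$'' is equivalent to $R(b_i,a_j)=L(b_i,a_j)$ is false in isolation (the color could have been popped at an earlier balanced moment and the balance later return to zero, leaving a different color on top); it becomes correct only in conjunction with the minimality clause, via the stronger invariant you already identify --- as long as the color has not been popped, the number of stack entries above it equals the running balance, so it is popped exactly at the first left endpoint whose preceding balance is zero --- and similarly ``sits untouched on the stack top throughout'' should read ``remains unpopped in the stack, resurfacing to the top exactly when the balance is zero.''
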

Now, in the case of unit intervals we want to show, how we can compute a function $RELATED(x,y)$, which is $1$ iff the interval $I_{\vert x \vert}$ and $I_{\vert y \vert}$ are related. The general case is discussed later in this section. As before, the intervals are labeled according to their left endpoints. Let $RE(x,y,l) = 1$ iff $\vert x \vert \leq \vert y \vert$ and the number of right endpoints between $b_{\vert x \vert}$ and $a_{\vert y \vert}$ is equal to $\vert l \vert$. Similarly, let $LE(x,y,l) = 1$ iff $\vert x \vert \leq \vert y \vert$ and the number of left endpoints between $b_{\vert x \vert}$ and $a_{\vert y \vert}$ is equal to $\vert l \vert$. Let $\chi_E(x,y)$ be the characteristic function of the edge set of a unit interval graph $G=(V,E)$ and $\chi_{E^c}(x,y)$ the characteristic function of the edge set of the complement graph, \ie $E^c = \lbrace (u,v) \mid u \neq v \text{ and } (u,v) \not\in E\rbrace$. Then we can compute $RE(x,y,l)$ and $LE(x,y,l)$ in the following way:
$$ \begin{array}{rcl}
H_1(x,y,z) & = & (\vert x \vert \leq \vert z \vert < \vert y \vert) \wedge \chi_{E^c}(z,y)\\
RE(x,y,l) & = & (\vert x \vert \leq \vert y \vert)\: \wedge\\
& & \left[ \exists z: H_1(x,y,z) \wedge (\vert z \vert - \vert x \vert = \vert l \vert) \wedge \overline{\exists z': H_1(x,y,z') \wedge (\vert z' \vert > \vert z \vert}) \right] \\
& &\\
H_2(x,y,z) & = & (\vert x \vert < \vert z \vert \leq \vert y \vert) \wedge \chi_{E^c}(x,z)\\
LE(x,y,l) & = & (\vert x \vert \leq \vert y \vert)\: \wedge\\
& & \left[\exists z: H_2(x,y,z) \wedge (\vert y \vert - \vert z \vert = \vert l \vert) \wedge \overline{\exists z': H_2(x,y,z') \wedge (\vert z' \vert < \vert z \vert}) \right].
\end{array}$$
The right endpoint of an interval $I_{\vert z \vert}$ is greater or equal than $b_{\vert x \vert}$ and less than $a_{\vert y \vert}$ iff $\vert x \vert \leq \vert z \vert < \vert y \vert$ and $I_{\vert z \vert}$ does not intersect $I_{\vert y \vert}$. Since we are dealing with unit interval graphs, if for some $z$ with $\vert x \vert \leq \vert z \vert < \vert y \vert$ the intervals $I_{\vert z \vert}$ and $I_{\vert y \vert}$ do not intersect, then it holds also for all $z'$ with $\vert x \vert \leq \vert z' \vert < \vert z \vert$. \Ie the maximal value of $\vert z \vert - \vert x \vert$ over all $z$ with the above property is equal to the number of right endpoints between $b_{\vert x \vert}$ and $a_{\vert y \vert}$ and, therefore, we compute the function $RE(x,y,l)$ correctly. Similar arguments show that $LE(x,y,l)$ is computed correctly, too. 

Together with Observation \ref{obs:num_rele}, we can compute the function $RELATED(x,y)$ as follows: 
$$ \begin{array}{rcl}
RELATED(x,y) & = & \exists l: RE(x,y,l) \wedge LE(x,y,l)\: \wedge\\
& & \overline{\exists z,l': (\vert z \vert < \vert y \vert) \wedge RE(x,z,l') \wedge LE(x,z,l')}
\end{array} $$
\noindent Now, we have to compute the sequence of related intervals, which is nothing more than the transitive closure of the related relation, which can be computed with $O(\log^2 N)$ functional operations. Finally, we have to assign a color to each interval, such that all intervals in a sequence of related intervals are getting the same color. In order to do this, we compute an order on the sequences of related intervals and assign the colors to the sequences according to that order. The order on the sequences is given by the order on the minimal interval number within the sequences. Putting all together, algorithm \ref{algo:unitcoloring} computes a coloring on a unit interval graph.

\begin{theorem}
Algorithm \ref{algo:unitcoloring} computes a coloring of a unit interval graph using the minimal number of colors and $O(\log^2 N)$ functional operations. 
\end{theorem}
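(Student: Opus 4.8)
The plan is to separate the claim into a correctness part and an operation-count part, and to use as a black box that the stack-based greedy colouring of Olariu \cite{Olariu91} uses the minimum possible number of colours on every interval graph. It therefore suffices to show (i) that Algorithm \ref{algo:unitcoloring} outputs, up to a bijective renaming, the colour classes produced by that greedy run on $G$, and (ii) that it does so with $O(\log^2 N)$ functional operations. Throughout I write $i:=\vert x\vert$, $j:=\vert y\vert$ and use that, as argued above, the functions $RE(x,y,l)$, $LE(x,y,l)$ and hence $RELATED(x,y)$ are computed correctly for unit interval graphs.

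\textbf{Correctness.} The heart of the correctness proof is to verify that Observation \ref{obs:num_rele} really describes the greedy colouring, i.e.\ that $I_i$ and $I_j$ are related exactly when $I_j$ is the first interval that, in the greedy run, receives the colour $c$ that $I_i$ pushes back onto the stack at its right endpoint $b_i$. I would prove this by tracking the stack: right after $b_i$ is processed, $c$ sits on the stack; it is the \emph{top} of the stack precisely at those later instants at which the number of right endpoints and the number of left endpoints encountered since $b_i$ coincide; and $c$ is handed out exactly to the first interval whose left endpoint falls on such a balanced instant, after which $c$ is no longer available. The quantity ``number of right (resp.\ left) endpoints strictly between $b_i$ and $a_{j'}$'' is exactly the bookkeeping of the stack's height relative to its height right after $b_i$, evaluated just before $a_{j'}$ is served, so the two conditions in Observation \ref{obs:num_rele} match ``$c$ is on top just before $a_j$'' and ``$c$ was not handed out at any earlier left endpoint''. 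Consequently the symmetric, reflexive transitive closure $R^{\ast}$ of $RELATED$ partitions $V$ into chains, and a chain is precisely the set of intervals receiving one fixed colour in the greedy run; hence the number of chains equals the number of colours used by the greedy algorithm, which is the minimum. Each chain is moreover an independent set: related intervals are disjoint since $b_i<a_j$, and along a chain $a_{i_t}<b_{i_t}<a_{i_{t+1}}$ forces pairwise disjointness, so adjacent nodes always lie in different chains. Finally, taking the node of smallest index in each chain as its representative gives distinct representatives across distinct chains; running \emph{EnumerateOrder} (Algorithm \ref{algo:enumorder}) on the order ``$\vert x\vert\le\vert y\vert$'' restricted to the set of representatives yields each representative's rank, and colouring every node by the rank of its chain's representative produces a colouring that uses exactly as many colours as there are chains (the minimum) and is proper because adjacent nodes lie in different chains.

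\textbf{Operation count and main obstacle.} Computing $RE$, $LE$ and $RELATED$ uses a constant number of syntheses and replacements together with quantifications over $O(\log N)$ variables, i.e.\ $O(\log N)$ operations. Symmetrising $RELATED$ and applying Algorithm \ref{algo:transclos} costs $O(\log^2 N)$ operations; extracting the chain representatives and, for each node, the representative of its chain from $R^{\ast}$ needs a constant number of operations plus quantifications over $O(\log N)$ variables; \emph{EnumerateOrder} on the representatives costs $O(\log^2 N)$ operations; and the final colour assignment is again $O(\log N)$ operations. Summing, the algorithm uses $O(\log^2 N)$ functional operations, the dominant contributions being the transitive closure and \emph{EnumerateOrder}. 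The step I expect to be the main obstacle is the stack-simulation lemma: making the LIFO argument precise, one must show that a balanced count of left and right endpoints since $b_i$ is both necessary and sufficient for $c$ to be the colour currently on top, and that the ``first balanced instant'' clause of Observation \ref{obs:num_rele} corresponds exactly to ``the first interval to be assigned $c$''; getting the off-by-one issues around strict vs.\ non-strict endpoint comparisons right (and handling shared endpoints via the perturbation remark from Section \ref{sec:pre_ig}) is where the care is needed.
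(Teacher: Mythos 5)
Your proposal is correct and follows essentially the same route as the paper: correctness is reduced to the optimality of the greedy stack-based colouring via Observation \ref{obs:num_rele} and the chain structure of $RELATED$, and the operation count is dominated by the $O(\log^2 N)$ operations of $TransitiveClosure$ and $EnumerateOrder$. The paper's own proof is in fact terser -- it simply invokes the correctness of the greedy algorithm and the operation bounds from Section \ref{sec:pre_basicalg} -- so your stack-simulation argument for Observation \ref{obs:num_rele} and the explicit chain/representative bookkeeping supply details the paper leaves implicit, but they do not constitute a different approach.
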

\begin{proof}
That the output is a coloring with the minimal number of colors follows directly from correctness of the greedy algorithm. The number of functional operations is dominated by the $TransitiveClosure$ and $EnumerateOrder$ procedures. As we have seen in section \ref{sec:pre_basicalg}, both procedures need $O(\log^2 N)$ functional operations.
\end{proof}

\begin{algorithm}[H]
\caption{Implicit coloring algorithm for unit interval graphs}
\label{algo:unitcoloring}
\begin{algorithmic}[1]
\REQUIRE Unit interval graph $(\chi_E, \chi_V)$
\ENSURE Coloring $COLOR(x,l)$ with $COLOR(x,l) = 1$ iff $I_{\vert x \vert}$ has color $\vert l \vert$.
\\ \COMMENT{Complement graph}
\STATE $\chi_{E^c}(x,y) = \chi_V(x) \wedge \chi_V(y) \wedge \overline{\chi_E(x,y)} \wedge (\vert x \vert \neq \vert y \vert)$\\
\COMMENT {Auxiliary functions to compute the number of right/left endpoints}\\
\COMMENT{between two intervals}
\STATE $H_1(x,y,z) = (\vert x \vert \leq \vert z \vert < \vert y \vert) \wedge \chi_{E^c}(z,y) $\\
\STATE $H_2(x,y,z) = (\vert x \vert < \vert z \vert \leq \vert y \vert) \wedge \chi_{E^c}(x,z)$\\
\COMMENT {Number of right endpoints between $b_{\vert x \vert}$ and $a_{\vert y \vert}$}\\
\STATE $\begin{array}{rcl}
RE(x,y,l) & = & (\vert x \vert \leq \vert y \vert)\: \wedge\\
& & \left[ \exists z: H_1(x,y,z) \wedge (\vert z \vert - \vert x \vert = \vert l \vert) \wedge \overline{\exists z': H_1(x,y,z') \wedge (\vert z' \vert > \vert z \vert}) \right] 
\end{array}$\\
\COMMENT {Number of left endpoints between $b_{\vert x \vert}$ and $a_{\vert y \vert}$}\\
\STATE $\begin{array}{rcl}
LE(x,y,l) & = & (\vert x \vert \leq \vert y \vert)\: \wedge\\
& & \left[\exists z: H_2(x,y,z) \wedge (\vert y \vert - \vert z \vert = \vert l \vert) \wedge \overline{\exists z': H_2(x,y,z') \wedge (\vert z' \vert < \vert z \vert}) \right]
\end{array}$\\
\COMMENT {Compute related intervals}
\STATE $\begin{array}[t]{lcl}
RELATED(x,y) & = & \exists l: RE(x,y,l) \wedge LE(x,y,l)\: \wedge\\
& & \overline{\exists z,l': (\vert z \vert < \vert y \vert) \wedge RE(x,z,l') \wedge LE(x,z,l')}
\end{array}$\\
\COMMENT{Compute set of intervals with the same color}
\STATE $SAMECOLOR(x,y) = TransitiveClosure(RELATED(x,y) \vee (\vert x \vert = \vert y \vert))$\\
\COMMENT{Order these sets}
\STATE $FIRST(x) = \overline{\exists x': SAMECOLOR(x',x) \wedge (\vert x' \vert < \vert x \vert)}$\\
\STATE $\begin{array}{lcl}
COLORORDER(x,y) & = & \exists x',y': SAMECOLOR(x',x) \wedge FIRST(x')\: \wedge\\
& & SAMECOLOR(y',y) \wedge FIRST(y') \wedge (\vert x' \vert < \vert y' \vert)
\end{array}$\\
\COMMENT{Assign the colors}
\STATE  $COLOR(x,l) = EnumerateOrder(COLORORDER(x,y))$
\RETURN $COLOR(x,l)$
\end{algorithmic}
\end{algorithm}
\noindent The only difference between the unit interval and the general case is the computation of the functions $LE$ and $RE$ (this is the only place where we need the unity property). What we actually need, is an order on the sequence of right endpoints to compute $RE$ and an order on the left endpoints of the intervals to compute $LE$ (and in the case of unit intervals both orders are the same). Assuming that we label the intervals according to their left endpoints, we only need to compute the order on the right endpoints. Let $EO(x,y)$ be this order, \ie $EO(x,y) = 1$ iff $b_{\vert x \vert} \leq b_{\vert y \vert}$. Remember the adjacency matrix of an interval graph from section \ref{sec:sizeintervalgraph} and assume that the left points of the intervals are the integers $0,\ldots,N-1$. We know that the interval with left endpoint $i \in \lbrace 0,\ldots, N-1 \rbrace$ has a maximal value $j$ such that $I_i$ and $I_{k}$ intersect for all $i \leq k \leq j$. Therefore, the right endpoint of $I_i$ has to be in $[j,j+1)$. Let $j$ and $j'$ be the maximal values such that $I_i$ intersects all $I_k$ with $i \leq k \leq j$ and $I_{i'}$ intersects all $I_k$ with $i' \leq k \leq j'$, respectively. If $j < j'$ ($j > j'$), then $b_i < b_{i'}$ ($b_i > b_{i'}$). If $j = j'$, then we can break ties arbitrary (\eg $b_i \leq b_{i'}$ iff $i \leq i'$). 
Now, we can compute $EO(x,y)$ as follows:
$$ \begin{array}{rcl}
H(x,y,x',y') & = & (\vert x \vert \leq \vert x' \vert) \wedge (\vert y \vert \leq \vert y' \vert) \wedge \chi_{E}(x,x') \wedge \chi_E(y,y')\\
EO(x,y) & = &  \exists x',y': H(x,y,x',y') \wedge (\vert x' \vert < \vert y' \vert \vee (\vert x' \vert = \vert y' \vert \wedge \vert x \vert < \vert y \vert))\: \wedge\\
& & \overline{\exists x'',y'': H(x,y,x'',y'') \wedge ((\vert x'' \vert > \vert x' \vert) \vee (\vert y'' \vert > \vert y' \vert))}
\end{array}$$
Notice that this order on the right endpoints does not have to be the same order on the original right endpoints. But, as we have shown, there is an interval representation of the graph, such that the left and right endpoints are ordered according to labels of the nodes and $EO(x,y)$, respectively. Finally, we have to compute $EEO(x,l) = EnumerateOrder(EO(x,y))$ and, with it, we get $RE(x,y,l)$ for general interval graphs:
$$ \begin{array}{rcl}
H_1(x,y,z) & = & \overline{EO(z,x)} \wedge EO(z,y) \wedge \chi_{E^c}(z,y)\\
RE(x,y,l) & = & (\vert x \vert < \vert y \vert) \wedge \left[ \exists z,l_1,l_2: H_1(x,y,z) \wedge EEO(x,l_1)\: \wedge \right.\\
& & \left. EEO(z,l_2) \wedge (\vert l_2 \vert - \vert l_1 \vert = \vert l \vert) \wedge \overline{\exists z': H_1(x,y,z') \wedge EO(z,z')} \right]
\end{array}$$
Since all additional operations are dominated by the $EnumarateOrder$ procedure, we get the same result as for unit intervals.
\begin{theorem}
Algorithm \ref{algo:unitcoloring} with the modified computation of $RE(x,y,l)$ outputs a coloring of an interval graph using the minimal number of colors and $O(\log^2 N)$ functional operations. 
\end{theorem}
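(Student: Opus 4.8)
The plan is to split the claim into two parts: first, that Algorithm~\ref{algo:unitcoloring} produces exactly Olariu's optimal greedy colouring \emph{provided} the functions $RE$ and $LE$ count, respectively, the right and the left endpoints strictly between $b_{\vert x\vert}$ and $a_{\vert y\vert}$; and second, that the modified definitions of $RE$ and $LE$ achieve this for an arbitrary (not necessarily unit) interval graph. The first part is essentially the unit-interval analysis already carried out: granting the counting property, Observation~\ref{obs:num_rele} gives $RELATED(x,y)=1$ iff $I_{\vert x\vert}$ and $I_{\vert y\vert}$ are related, hence $SAMECOLOR$ is the reflexive-transitive closure of the related relation and its classes are precisely the colour classes of the greedy algorithm; $FIRST$ picks the smallest-label member of each class, $COLORORDER$ totally orders the classes by these members, and $EnumerateOrder$ assigns to each interval the rank of its class, which is exactly the colour the greedy algorithm hands out. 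Optimality of the number of colours then follows from the optimality of the greedy algorithm \cite{Olariu91}.

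For the second part, the only use of the unit property in the unit case was to obtain a common order of all left and all right endpoints, so I would first show that $EO(x,y)$ computes a total order that is the right-endpoint order of \emph{some} interval representation of $G$. Label the nodes by increasing left endpoints, so we may take $a_i=i$, and for each $i$ set $j_i:=\max\{k:\chi_E(i,k)=1\}$ (and $j_i:=i$ when $I_i$ has no right neighbour). The consecutive-ones structure of the adjacency matrix from Section~\ref{sec:sizeintervalgraph} gives $\{k\ge i: I_i\cap I_k\neq\emptyset\}=\{i,\dots,j_i\}$, hence $b_i\in(j_i,j_i+1)$ once all endpoints are assumed distinct as in Section~\ref{sec:pre_ig}. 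The function $H(x,y,x',y')$ together with its two maximality conditions selects exactly the pair $(j_{\vert x\vert},j_{\vert y\vert})$, so $EO(x,y)=1$ iff $j_{\vert x\vert}<j_{\vert y\vert}$, or $j_{\vert x\vert}=j_{\vert y\vert}$ and $\vert x\vert<\vert y\vert$, which is a total order. Placing, inside each unit gap $(m,m+1)$, the right endpoints in increasing order of label produces a representation $\mathcal{I}'$ whose intersection graph is still $G$ (for $i<j$ one checks that $I_i$ and $I_j$ intersect in $\mathcal{I}'$, \ie $a_j=j<b_i$, iff $j\le j_i$, iff $\chi_E(i,j)=1$), and in $\mathcal{I}'$ the right-endpoint order is $EO$ and the left-endpoint order is the label order.

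Working entirely inside $\mathcal{I}'$, the remaining steps are routine: $EEO(x,l)=EnumerateOrder(EO(x,y))$ is the rank of $b_{\vert x\vert}$ in the sorted sequence of right endpoints, and in the modified $RE(x,y,l)$ the auxiliary $H_1(x,y,z)=\overline{EO(z,x)}\wedge EO(z,y)\wedge\chi_{E^c}(z,y)$ captures precisely the nodes $z$ with $b_{\vert x\vert}<b_{\vert z\vert}<a_{\vert y\vert}$ (the first conjunct forces $b_{\vert z\vert}>b_{\vert x\vert}$, and $EO(z,y)$ together with non-intersection of $I_{\vert z\vert}$ and $I_{\vert y\vert}$ forces $b_{\vert z\vert}<a_{\vert y\vert}$), while the innermost negation isolates the such $z$ of largest right endpoint, so $\vert l_2\vert-\vert l_1\vert$ equals the number of right endpoints strictly between $b_{\vert x\vert}$ and $a_{\vert y\vert}$; correctness of $LE$ is symmetric and needs only the left-endpoint (label) order. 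For the operation count: constructing $EO$ uses a constant number of synthesis and argument-reordering steps plus quantifications over $O(\log N)$ variables, \ie $O(\log N)$ operations; $EEO$ is one call to $EnumerateOrder$, costing $O(\log^2 N)$; $RE$ and $LE$ cost $O(\log N)$ each; and the overall cost is dominated by the single $TransitiveClosure$ call and the two $EnumerateOrder$ calls, each $O(\log^2 N)$, for a total of $O(\log^2 N)$.

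The main obstacle is the second part, specifically verifying that $EO$ really is the right-endpoint order of a legitimate representation of the \emph{same} graph: one must check that the tie-breaking when $j_i=j_{i'}$ and the freedom in placing endpoints inside each unit gap never alter any intersection, handle the boundary case of intervals with no right neighbour (where $j_i=i$ and $H$ has no obvious witness, so a small convention or a separate argument is needed), and confirm that $H_1$, which is phrased in terms of $EO$ and $\chi_{E^c}$ rather than geometric positions, coincides with ``right endpoint strictly between $b_{\vert x\vert}$ and $a_{\vert y\vert}$''. Once these bookkeeping points are settled, the rest follows by re-using the unit-interval argument verbatim.
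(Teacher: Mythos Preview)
Your proposal is correct and follows essentially the same route as the paper: the correctness of $EO$ as the right-endpoint order of \emph{some} representation of $G$ is exactly the argument given in the text preceding the theorem (the paper merely asserts the existence of such a representation, whereas you explicitly construct $\mathcal{I}'$ and verify it induces the same graph), and the operation count then reduces to the unit-interval case since the only new cost is one extra $EnumerateOrder$ call. The boundary issue you flag---intervals with no right neighbour, where $H$ has no witness because $\chi_E$ is irreflexive---is genuinely glossed over in the paper and would require a small convention (\eg replacing $\chi_E$ by $\chi_E\vee EQ$ inside $H$), but it does not affect the structure of the argument.
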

\section{Experimental Evaluation}
We evaluated the implicit maximum matching algorithm on unit interval graphs and the implicit coloring algorithm on unit and general interval graphs. Unfortunately, the implicit coloring algorithm performed poorly even on instances of size around $2000$. Therefore, we only show the results for the maximum matching algorithm but want to begin with a discussion of this performance difference: At a first glance, this might not be surprising due to the more complex coloring algorithm but having a closer look we see that the implicit matching algorithm is optimized for the implicit setting while the implicit coloring algorithm uses some nice ideas to simulate the sequential algorithm. Hence, these results do not rule out the possibility of an efficient implicit coloring algorithm but suggest that there have to be new ideas to benefit more from the strengths of implicit algorithms.

Unit interval graphs can be represented as balanced nonnegative strings over $\lbrace \text{`['} , \text{ ']'} \rbrace$ (see, \cite{SaitohYKU10}) and such strings are created randomly using the algorithm in \cite{ArnoldS80}. We generated $35$ random graphs of size $2^i$ for $i=10,\ldots,23$. The nodes of the graphs are encoded as in Section \ref{sec:sizeintervalgraph}. We compare the OBDD-based algorithm to the algorithm which gets the interval representation as an input, sort the intervals according to their starting point and compute a maximum matching by scanning this sorted sequence with the same idea used in the implicit algorithm.
\subsubsection*{Experimental Setup}
We implemented the implicit algorithm with the BDD framework CUDD 2.5.0\footnote{\url{http://vlsi.colorado.edu/~fabio/CUDD/}} by F. Somenzi. The algorithms are implemented in C++ and were compiled with Visual Studio 2012 in the default release configuration. All source files, scripts and random seeds are publicly available\footnote{\url{http://ls2-www.cs.uni-dortmund.de/~gille/}}. The experiments were performed on a computer with a 2.6 GHz Intel Core i5 processor and 4 GB main memory running Windows 7. The runtime is measured by used processor time in seconds and the space usage of the implicit algorithm is given by the maximum SBDD size which came up during the computation, where a SBDD is a collection of OBDDs which can share nodes. Due to the small variance of these values, we only show the mean in the diagrams.
\begin{figure}[t]
\centering
\includegraphics[width=0.48\textwidth]{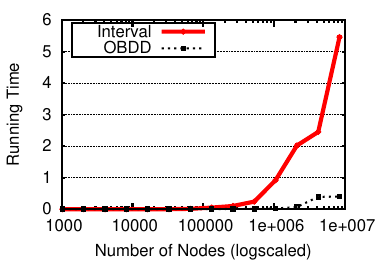}
\includegraphics[width=0.48\textwidth]{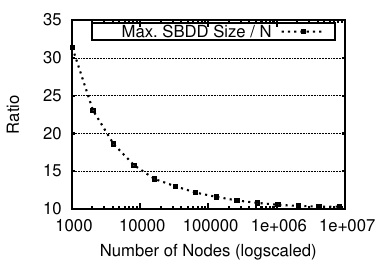}
\caption{Runtime and memory of the matching algorithms on random unit interval graphs. Memory plot shows the ratio of $S \log S$ (space usage of the OBDD-based algorithm) and number of nodes.}\label{fig:diag}
\end{figure}
\subsubsection*{Results}
The implicit matching algorithm outperforms the explicit matching algorithm on unit interval graphs (see Fig. \ref{fig:diag}). Even on graphs with more than 8 million nodes the implicit algorithm computes a maximum matching within 1 seconds. Storing a SBDD of size $S$ needs $O(S \log S)$ bits. The memory diagram shows that the asymptotic space usage of the implicit algorithm on these instances is close to $O(N)$. Recall that the unit interval representation needs $\Theta(N \log N)$ space since $\log N$ bits are needed to represent the starting points. \Ie the implicit algorithm needs less space and can compute a maximum matching on larger instances than the explicit one. An interesting consequence of these results is that the submodules of our maximum matching algorithm, namely computing the connected components, a Hamiltonian path in every connected component and a maximum matching on these paths, are also very fast and space efficient which is surprising, since especially the computation of the transitive closure is often a bottleneck in implicit algorithms.
\section{Conclusion and Open Questions}
In this paper, we presented a method to show upper bounds of the size of OBDDs representing a graph by using the adjacency matrix. Using this method, we could improve known results on the OBDD size of interval graphs and we think that it is possible to show similar results for other graph classes with a well structured adjacency matrix, \eg convex graphs where the nodes can be ordered such that the neighborhood of every node consists of nodes which are consecutive in this order. The gap between the upper and lower bound (using another labeling or variable order) of the OBDD size of interval graphs is $O(\log N)$. It is an interesting open question whether there is another labeling and/or variable order such that the OBDD size is $O(N)$ or the general lower bound can be increased to $\Omega(N \log N)$, which we believe is more likely, since the observation that the columns of the $\pi$-ordered adjacency matrix are independent also holds for an arbitrary labeling.

Even for a fixed variable order, the complexity of computing a node labeling for a given graph, such that the representing OBDD has minimal size, is unknown. The $\pi$-ordered adjacency matrix seems to help to prove upper/lower bounds on the OBDD size for a fixed labeling. Using this matrix to bound the size of OBDDs for every labeling could be object of further research.

The parallel maximum matching algorithm on general interval graphs \cite{ChungPC97} is more complex than the parallel maximum matching algorithm on unit interval graphs and the algorithm does not seem to be directly applicable to develop an implicit algorithm as for unit intervals.

The investigation of implicit algorithms on special graph classes seems quite promising and it would be interesting if the good performance can also be achieved for other large graph classes.

\subsubsection*{Acknowledgements} I thank Beate Bollig, Melanie Schmidt and Chris Schwiegelshohn for the valuable discussions and, together with the anonymous referees, for their comments on the presentation of the paper.

\bibliographystyle{acm} 
\bibliography{literatur}

\end{document}